\newtheorem{theorem}{Theorem}[section]
\newtheorem{assumption}{Assumption}
\newtheorem{claim}{Claim}
\DeclareMathOperator{\EX}{\mathbbm{E}}
\begin{document}
\setlength{\abovedisplayskip}{4pt}
\setlength{\belowdisplayskip}{10pt}
\setlength{\abovedisplayshortskip}{4pt}
\setlength{\belowdisplayshortskip}{10pt}

\title{Superinfection and the hypnozoite reservoir for \textit{Plasmodium vivax}: a general framework}

\author{Somya Mehra$^{1}$ \and James M. McCaw$^{1,2}$ \and \and Peter G. Taylor$^1$}

\date{%
    $^1$School of Mathematics and Statistics, The University of Melbourne, Parkville,  Australia\\%
    $^2$Centre for Epidemiology and Biostatistics, Melbourne School of Population and Global Health, The University of Melbourne, Parkville, Australia
}

\maketitle

\section*{Abstract}

Malaria is a parasitic disease, transmitted by mosquito vectors. \textit{Plasmodium vivax} presents particular challenges for disease control, in light of an undetectable reservoir of latent parasites (hypnozoites) within the host liver. The burden of blood-stage infection for \textit{P. vivax} is two-fold, driven by both primary infections (resulting directly from mosquito bites) and relapses (resulting from hypnozoite activation). Superinfection, which is driven by temporally proximate mosquito inoculation and/or hypnozoite activation events, is an important feature of \textit{P. vivax}. Here, we present a model of hypnozoite accrual and superinfection for \textit{P. vivax}. To couple host and vector dynamics for a homogeneously-mixing population, we construct a density-dependent Markov population process with countably many types, for which disease extinction is shown to occur almost surely. We also establish  a functional law of large numbers (FLLN), taking the form of an infinite-dimensional system of ordinary differential equations (ODEs) that can also be recovered by coupling expected host and vector dynamics (i.e. a hybrid approximation) or through a standard compartment modelling approach.\\

Recognising that the subset of these equations that model the infection status of the human hosts has precisely the same form as the Kolmogorov forward equations for a Markovian network of infinite server queues with an inhomogeneous batch arrival process, we use physical insight into the evolution of the latter process to write down a time-dependent multivariate generating function for the solution. We use this characterisation to collapse the infinite-compartment model into a single integrodifferential equation (IDE) governing the intensity of mosquito-to-human transmission. Through a steady state analysis, we recover a threshold phenomenon for this IDE in terms of a parameter $R_0$ expressible in terms of the primitives of the model, with the disease-free equilibrium shown to be uniformly asymptotically stable if $R_0<1$ and an endemic equilibrium solution emerging if $R_0>1$. Our work provides a theoretical basis to explore the epidemiology of \textit{P. vivax}, and introduces a strategy for constructing tractable population-level models of malarial superinfection that can be generalised to allow for greater biological realism in a number of directions.

\section{Introduction}
Malaria is a parasitic, vector-borne disease with a staggering pulic health burden. The overwhelming majority of malaria cases (98\%) are attributed to the parasite \textit{Plasmodium falciparum}, particularly in Africa \parencite{who2021}. \textit{Plasmodium vivax}, however, exhibits a broader geographical distribution, driving much of the malaria burden in South East Asia, the Americas, the Western Pacific and the Eastern Mediterranean \parencite{who2021}.  While an estimated 4.5 million malaria cases were attributed to \textit{P. vivax} in 2020 alone \parencite{who2021}, morbidity arising from \textit{P. vivax} infections remains ``obscure and insidious'' \parencite{battle2021global}.\\

The transmission of malaria parasites to humans is mediated by \textit{Anopheles} mosquito vectors. During the course of a bloodmeal, an infected mosquito can transmit parasites (sporozoites) to a human host. Following a period of liver-stage development (exoerythrocytic schizogony), parasites are released into the bloodstream, giving rise to a blood-stage infection that is sustained by the replication of parasites in invaded red blood cells \parencite{venugopal2020plasmodium}. A key epidemiological characteristic of malaria is the phenomenon of superinfection. Since the circulation and replication of (pre-existing) parasites in the bloodstream does \textit{not} preclude further blood-stage infection, an individual can concurrently harbour multiple co-circulating broods of blood-stage parasites. In the context of \textit{P. falciparum}, we define each blood-stage `brood' to derive from a single infective bite. The interpretation of a blood-stage `brood' for \textit{P. vivax} is more nuanced, in light of its ability to cause relapsing infections following the accrual of a ``hypnozoite reservoir'' \parencite{white2012relapse, white2014modelling}. Notably, a \textit{P. vivax} parasite (sporozoite) injected into a human host has two possible fates: it either gives rise to a primary (blood-stage) infection within approximately 9 days of the bite itself \parencite{mikolajczak2015plasmodium}, or develops into a hypnozoite \parencite{mueller2009key}. Hypnozoites undergo indeterminate latency periods, often lasting weeks or months, during which they are undetectable using standard techniques \parencite{schafer2021plasmodium}. The activation of each hypnozoite, however, can trigger an additional blood-stage infection, known as a relapse \parencite{mueller2009key}. For \textit{P. vivax}, each primary infection and relapse is defined to comprise a separate blood-stage brood.\\

We define the multiplicity of broods (MOB) to be the number of co-circulating blood-stage broods in a host at a given point in time, with superinfection taken to be a collective term for blood-stage infections with MOB$>1$. As such, superinfection arises from temporally proximate reinfection (that is, infective bites) and, in the case of \textit{P. vivax}, hypnozoite activation events. With epidemiological data indicating the preponderance of relapses over primary infections \parencite{commons2020estimating} and evidence of \textit{P. vivax} superinfection even in the absence of reinfection \parencite{popovici2018genomic}, analysis of the statistics of superinfection for \textit{P. vivax} warrants careful consideration of the hypnozoite reservoir.\\


The classical framework of malarial superinfection for \textit{P. falciparum}, proposed initially by \textcite{macdonald1950analysis} and formulated mathematically by \textcite{bailey1957}, assumes independent clearance of each brood, without imposing an upper bound on the MOB. Under this setting, a natural construction to describe the within-host dynamics of superinfection is an infinite-server queue with a time dependent arrival rate given by the intensity of mosquito-to-human transmission \parencite{dietz1974malaria, nedelman1984inoculation, smith2009endemicity, henry2020hybrid}. Hereafter, we refer to the mosquito-to-human transmission intensity, as quantified by the infective bite rate per human, as the force of reinfection (FORI). For \textit{P. vivax}, introducing the additional assumption that the dynamics of each hypnozoite are governed by independent stochastic processes \parencite{white2014modelling}, we have recently extended this idea to characterise within-host and superinfection dynamics using an {\it open network} of infinite-server queues with geometrically-distributed batch arrivals at a time-dependent rate given by the FORI \parencite{mehra2021antibody, mehra2022hypnozoite}. In \textcite{mehra2022hypnozoite}, we derive a time-dependent generating function for the state of the queueing network, which can be inverted analytically to recover marginal distributions for MOB and the hypnozoite burden, amongst other quantities of epidemiological interest, on a within-host scale.\\

Mathematical modelling of superinfection and hypnozoite dynamics at the \textit{population-level} can be challenging. As we note in \parencite{mehra2022hypnozoite}, a common approach in the construction of transmission models of \textit{P. vivax} has been to consolidate hypnozoite carriage into a single state, with an accompanying parametric form for the time to first relapse for hypnozoite-positive individuals \parencite{ishikawa2003mathematical, aguas2012modeling, roy2013potential, chamchod2013modeling, robinson2015strategies, white2016variation}. The binarisation of hypnozoite carriage, however, obscures the relationship between transmission intensity and hypnozoite accrual: there is no variation in the risk of relapse based on hypnozoite density, the limitations of which are discussed in \parencite{mehra2022hypnozoite}. Explicit variation in the hypnozoite burden is captured in the deterministic models of \textcite{white2014modelling, white2018mathematical, anwar2021multiscale}. While the `batch' model of \parencite{white2018mathematical} captures `broods' of hypnozoites in the liver, it ignores variation in parasite inoculum sizes. On the other hand, both \parencite{white2014modelling, anwar2021multiscale} take hypnozoite densities into account. To account for superinfection, \parencite{white2014modelling} employs a ``pseudoequilibrium approximation'' for the (blood-stage) infection recovery rate, adopting the functional form derived by \textcite{dietz1974malaria}. This functional form, which was derived in the absence of hypnozoite accrual, is embedded in multiple transmission models for \textit{P. falciparum} as a proxy for superinfection \parencite{hagmann2003malaria, gemperli2006malaria, smith2007standardizing, alonso2019critical}. However, we argue that this functional form, as embedded by \parencite{white2014modelling} in a model for \textit{P. vivax}, is not appropriate in a model that takes hypnozoite accrual into account (see Appendix \ref{appendix::pseudoeq} for details). The multiscale model of \parencite{anwar2021multiscale} is intended to serve as a re-formulation of the infinite-compartment model proposed in \parencite{white2014modelling}. To incorporate hypnozoite accrual in a simple population-level framework, it draws on the relapse rate, conditional on the absence of blood-stage infection, derived in \textcite{mehra2022hypnozoite}, to derive a numerically tractable system of integrodifferential equations (IDEs). However, while the above-mentioned conditional relapse rate is derived under a framework that explicitly allows for superinfection, the population-level model of \parencite{anwar2021multiscale} does not have separate compartments for different values of the MOB (see Appendix \ref{appendix::superinf_ide} for details, and a proposed correction to  \parencite{anwar2021multiscale} that has been adopted in subsequent work \parencite{anwar2023optimal}).\\

To the best of our knowledge, the thesis of \textcite{mehra2022superinf} --- which forms of the basis of this paper --- derives the first model that characterises both superinfection and hypnozoite dynamics for \textit{P. vivax}. The model assumptions, building on the work of \textcite{white2014modelling}, are detailed in Section \ref{sec::vivax_queue_network}, where we re-visit the queueing model introduced in \textcite{mehra2022hypnozoite} to characterise within-host superinfection and hypnozoite dynamics as a function of the FORI. We then construct a density-dependent Markov population process to couple host and vector dynamics in a homogeneously-mixing population, whilst allowing for superinfection and the accrual of the hypnozoite reservoir in the absence of human demographics in Section \ref{sec::vivax_markov}, proving that disease extinction occurs almost surely (Theorem \ref{proof::vivax_markov_extinction}). Using the work of \textcite{barbour2012law}, we obtain a functional law of large numbers (FLLN), that takes the form of an infinite-dimensional system of ordinary differential equations (ODEs) (Section \ref{sec::vivax_lln_result}) for which the model of \textcite{bailey1957} arises as a special case (Appendix \ref{sec::bailey}). By drawing on our previous analysis of the within-host model, we show that FLLN can be reduced to a single IDE governing the time evolution of the FORI; the dynamics of superinfection and the hypnozoite reservoir in the human population can be recovered as a function of the FORI solving this IDE (Section \ref{sec::lln_ide}). We then establish a threshold phenomenon for the reduced IDE, with the disease-free equilibrium shown to be uniformly asymptotically stable if $R_0<1$, and an endemic equilibrium solution emerging iff $R_0>1$ (Theorem \ref{theorem::ide_ss}, Section \ref{sec::lln_ss}).\\

A recurrent theme in our analysis of population-level models is the utility of a physical understanding of the within-host model, governing the hypnozoite/MOB burden within a single human as a function of the intensity of mosquito-to-human transmission. In Section \ref{sec::discussion}, we discuss how the ideas presented in this manuscript constitute a general strategy that can be employed to construct tractable population-level models of malarial superinfection, with appropriate constraints on the underlying model structure.

\section{Modelling within-host hypnozoite and superinfection dynamics using an open network of infinite server queues with batch arrivals} \label{sec::vivax_queue_network}

\textcite{white2014modelling} construct a within-host model for short-latency (tropical) strains of \textit{P. vivax}, predicated on the following set of assumptions:
\begin{itemize}
    \item Each infective mosquito bite immediately gives rise to a primary (blood-stage) infection and establishes a batch of hypnozoites in the liver.
    \item Hypnozoite batch sizes $N_i$ (for the $i^\text{th}$ bite) which are independent and identically-distributed (i.i.d.) across bites, are geometrically-distributed with probability mass function
    \begin{align}
        P( N_i = n ) = \frac{1}{1 + \nu} \Big( \frac{\nu}{1 + \nu} \Big)^n. \label{eq:bite_pmf}
    \end{align}
    for $n \in \mathbbm{Z}_{\geq 0}$ and mean $\EX[N_i]=\nu$.
    \item Each hypnozoite in the liver undergoes activation at constant rate $\alpha$, which gives rise to the host suffering a (blood-stage) relapse; and death at constant rate $\mu$.
    \item Hypnozoites behave independently; that is, the dynamics of each of the hypnozoites is described by an independent stochastic process.
\end{itemize}

We extend the framework of \textcite{white2014modelling} to explicitly allow for superinfection. Specifically, we make the assumption that each blood-stage infection (relapse or primary) is naturally cleared at constant rate $\gamma$, with independent dynamics for each hypnozoite and blood-stage infection; as such, the existence of a previous blood-stage infection does not preclude further blood- and liver-stage infections, nor alter the rate of clearance for subsequent blood-stage infections.\\

We first examine the within-host dynamics of superinfection and the hypnozoite reservoir as a function of the FORI. In \textcite{mehra2022hypnozoite}, we construct an open network of infinite server queues with batch arrivals to concurrently describe hypnozoite accrual and the burden of blood-stage infection (allowing for superinfection) as a function of mosquito-to-human transmission intensity. The formulation of \parencite{mehra2022hypnozoite} can be extended to account for long-latency phenotypes, which are predominantly found in temperate regions \parencite{white2014modelling, mehra2020activation}, and the administration of drug treatment at a pre-determined sequence of times. Here, we consider the simplest case of the model presented in \parencite{mehra2022hypnozoite}, restricting our attention to short-latency phenotypes (characteristic of tropical transmission settings) \parencite{white2014modelling} in the absence of drug treatment.\\

We begin by delineating the set of possible states that a single hypnozoite can occupy:
\begin{itemize}
    \item $H$ indicates a hypnozoite that is currently present in the liver;
    \item $A$ indicates a hypnozoite that has activated to give rise to a relapse that is currently in progress;
    \item $C$ indicates a hypnozoite that has previously given rise to a relapse, which has since been cleared;
    \item $D$ indicates a hypnozoite that has died, rather than activating.
\end{itemize}

Hypnozoites in the liver (state $H$) undergo activation at rate $\alpha$ and death at rate $\mu$. Relapses (state $A$) are cleared from the bloodstream at rate $\gamma$. This model, introduced in \parencite{mehra2022hypnozoite}, is a simple extension of the short-latency model of \parencite{white2014modelling} that accounts for the clearance of blood-stage infection. A continuous-time Markov chain model that captures the above dynamics has non-zero transition rates 
\begin{align*}
    &q(H,A) = \alpha \qquad q(H, D) = \mu \qquad q(A, C) = \gamma
\end{align*}
with absorbing states $C$ and $D$.\\

We denote by $p_s(t)$ the probability that a hypnozoite is in state $s \in \{ H, A, C, D \} := S_h$ at time $t$ after inoculation. It is straightforward to establish that when $\alpha + \mu \neq \gamma$,
\begin{align}
    p_H(t) &= e^{-(\alpha + \mu)t}\\
    p_A(t) &= \frac{\alpha}{(\alpha+\mu) - \gamma} \big( e^{-\gamma t} - e^{-(\alpha + \mu)t} \big)\\
    p_C(t) &= \frac{\alpha}{\alpha + \mu} \big( 1 - e^{-(\alpha + \mu)t} \big) - \frac{\alpha}{(\alpha+\mu) - \gamma} \big( e^{-\gamma t} - e^{-(\alpha + \mu)t} \big) \\
    p_D(t) &= \frac{\mu}{\alpha + \mu} \big( 1 - e^{-(\alpha + \mu)t} \big).
\end{align}
as in Equations (13) to (16) of \parencite{mehra2022hypnozoite}. In practice, we would expect the mean duration of hypnozoite carriage $1/(\alpha+\mu)$ to exceed the mean duration of each blood-stage infection $1/\gamma$.\\

Likewise, we delineate the state space for each primary infection:
\begin{itemize}
    \item $P$ indicates a primary infection that is currently in progress
    \item $PC$ indicates a primary infection that has been cleared from the bloodstream.
\end{itemize}

We assume that each bite necessarily triggers a primary infection that is cleared naturally from the bloodstream at the constant rate $\gamma$, yielding a continuous-time Markov chain model with transition rates
\begin{align*}
    q(P, PC) = \gamma \qquad q(PC, P) = 0.
\end{align*}

To embed our model for a single hypnozoite in an epidemiological framework, we construct an open network of infinite server queues, labelled $H, A, C, D, P, PC$ (Figure \ref{fig:vivax_queue}). The arrival process, comprising mosquito bites, is governed by non-homogeneous Poisson process with a time-dependent rate $\lambda(t)$, such that
\begin{align*}
    \int^t_0 \lambda(\tau) d \tau < \infty \text{ for all } t \geq 0.
\end{align*}

Each bite leads to the arrival of a single `individual' in queue $P$ (that is, a primary infection), in addition to a geometrically-distributed batch (with PMF (\ref{eq:bite_pmf})) in queue $H$ (representing the hypnozoite reservoir). Hypnozoites in queue $H$ follow the dynamics described above in moving to queues $A$, $C$ or $D$. A primary infection in queue $P$ moves to queue $PC$ at rate $\gamma$.\\

Denote by $N_s(t)$ the number of `individuals' (either hypnozoites or infections) in queue $s \in \{ H, A, C, D, P, PC \} := S$ at time $t$. From Equation (39) of \parencite{mehra2022hypnozoite}, given
\begin{align*}
    N_H(0) = N_A(0) = N_D(0) = N_C(0) = N_P(0) = N_{PC}(0) = 0,
\end{align*}
the joint PGF for ${\bf z} = (z_H, z_A, z_D, z_C, z_P, z_{PC})$
\begin{align}
     G(t, \mathbf{z}) & := \EX \Big[ \prod_{s \in S} z_s^{N_s(t)} \Big]  = \exp \bigg\{ - \int^t_0 \lambda(\tau) \Big[ 1 -  \frac{e^{-\gamma (t-\tau)} z_P + (1-  e^{-\gamma (t-\tau)}) z_{PC}}{1 + \nu \Big( 1- \sum_{s \in S_h} z_s \cdot p_s(t-\tau) \Big)} \Big] d \tau \bigg\} \label{vivax_multi_pgf}
\end{align}
is guaranteed to converge in the domain $\mathbf{z} \in [0, 1]^6$.\\

\begin{figure}
    \centering
    \includegraphics[width=0.7\textwidth]{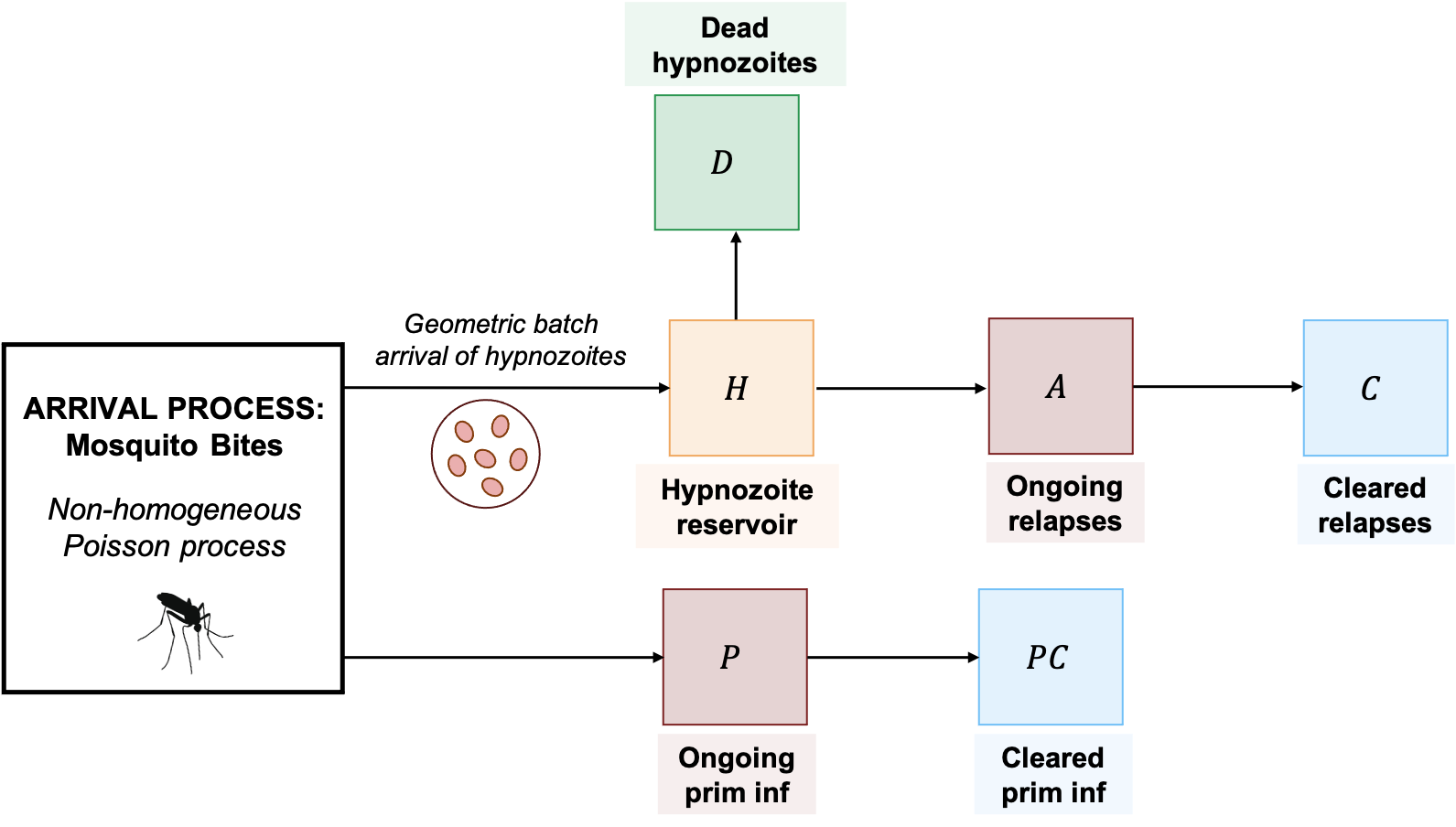}
    \caption{Schematic of the open network of infinite server queues governing the within-host hypnozoite and MOB burden, as a function of the intensity of mosquito-to-human transmission. Adapted from Figure 3 of \textcite{mehra2022hypnozoite} as a special case (short-latency hypnozoites, no drug treatment).}
    \label{fig:vivax_queue}
\end{figure}
Explicit formulae for quantities of epidemiological interest --- including marginal distributions for MOB and the hypnozoite burden; the proportion of recurrences that are expected to be relapses and the cumulative burden of blood-stage infection over time --- are provided in \parencite{mehra2022hypnozoite}. There the marginal distributions for MOB and the hypnozoite burden are expressed in terms of partial exponential Bell polynomials. We can equivalently formulate recurrence relations to compute marginal distributions of interest following the approach of \textcite{willmot2001transient}, or compute joint distributions using the recurrence relations we derive in \textcite{mehra2023open}.

\section{A Markov population process with countably many types} \label{sec::vivax_markov}
To couple vector and human dynamics, whilst accounting for superinfection and the hypnozoite reservoir in the absence of human demographics (that is, without allowing for births/deaths in the human population), we construct a density-dependent Markov population process with countably many types. We consider a closed, homogeneously mixing population, with fixed numbers of humans $P_H$ and mosquitoes $P_M$, where:
\begin{itemize}
    \item Each mosquito bites humans at constant rate $\beta$.
    \item When an uninfected mosquito bites a \textit{blood-stage} infected human (that is, a human with at least one ongoing relapse or primary infection), human-to-mosquito transmission occurs with probability $q$.
    \item When an infected mosquito bites \textit{any} human in the population, mosquito-to-human transmission occurs with probability $p$. This involves the establishment of:
    \begin{itemize}
        \item a primary (blood-stage) infection, which increases the MOB of the human host by one; and
        \item a batch of hypnozoites in the liver, where the batch sizes are geometrically-distributed with PMF (\ref{eq:bite_pmf}) and are i.i.d. across bites.
    \end{itemize}
    \item Within a human, each hypnozoite and primary infection is governed by an independent stochastic processes. 
    \item Within a human, each hypnozoite undergoes activation at rate $\alpha$ and death at rate $\mu$.
    \item Within a human, each primary infection or relapse (triggered by a hypnozoite activation event) is cleared independently at constant rate $\gamma$.
    \item Each infected mosquito is replaced by an uninfected mosquito at rate $g$.
\end{itemize}

We can formulate the state space of the Markov chain in several ways. One approach is to label each human with a hypnozoite/MOB state, and each mosquito with a binary infection state. Specifically, we can denote the state at time $t$ as $(\mathbf{u}(t), \mathbf{v}(t))$ where
\begin{itemize}
    \item $\mathbf{u}(t)$ is a $P_H$-dimensional vector whose $r^{th}$ component itself is a vector giving the number of `individuals' in compartment $H$, and the sum of individuals in compartments $A$ and $P$ in Figure \ref{fig:vivax_queue} for the $r^{th}$ human; and
    \item $\mathbf{v}(t)$ is a is a $P_M$-dimensional vector whose $s^{th}$ component is 1 if the $s^{th}$ mosquito is infected and zero otherwise,
\end{itemize}
yielding the state space
\begin{align*}
    \chi' = (\mathbbm{Z}_{\geq 0} \times \mathbbm{Z}_{\geq 0})^{P_H} \times \{0, 1\}^{P_M}.
\end{align*}

Alternatively, we can formulate the state space to count the number of humans and mosquitoes in each respective infection state. At time $t$, define
\begin{itemize}
    \item $h_{i,j}(t)$, $i,j \in \mathbbm{Z}_{\geq 0}$ to be the \textit{number} of humans with a hypnozoite reservoir of size $i$ and MOB precisely $j$, and
    \item $m_i(t)$, $i \in \{0, 1 \}$ to be the \textit{number} of uninfected and infected mosquitoes respectively.
\end{itemize}
We can thus denote the state of the Markov chain at time $t$ to be $( \mathbf{h}(t), \mathbf{m}(t))$ where
\begin{align*}
    \mathbf{h}(t) &= (h_{0,0}(t), h_{0,1}(t), h_{1,0}(t), h_{0, 2}(t), h_{1, 1,}(t), h_{2,0}(t), \dots)\\
    \mathbf{m}(t) &= (m_0(t), m_1(t))
\end{align*}
The state space is then
\begin{align*}
    \chi = \big\{ (\mathbf{h}, \mathbf{m}) \in [0, P_H]^{\mathbbm{N}} \times [0, P_M]^2: |\mathbf{h}|_1 = P_H, |\mathbf{m}|_1=P_M \big\},
\end{align*}
where the $\big( \frac{1}{2}(i+j+1)(i+j) + i +1\big)^{\text{th}}$ term of $\mathbf{h}(t)$ corresponds to $h_{i,j}(t)$. For notational convenience, we denote by $\mathbf{e_{i,j}}$ the $\big( \frac{1}{2}(i+j+1)(i+j) + i +1\big)^{\text{th}}$ (unit) coordinate vector in $\mathbbm{R}^\mathbf{N}$.\\

To obtain a density-dependent Markov population process, we proceed with the state description $(\mathbf{h}(t), \mathbf{m}(t))$. The transition rates are
\begin{align}
    &q_{ ( \mathbf{h}, \mathbf{m}), ( \mathbf{h} -  \mathbf{e_{i, j}} + \mathbf{e_{i, j-1}}, \mathbf{m})} = \gamma j h_{i, j}, \, \, i \geq 0, j \geq 1 \label{vivax_transition_rate_human_recov}\\
    &q_{ ( \mathbf{h}, \mathbf{m}), ( \mathbf{h} -  \mathbf{e_{i, j}} + \mathbf{e_{i-1, j}}, \mathbf{m})} = \mu i h_{i, j}, \, \, i \geq 1, j \geq 0\\
    &q_{ ( \mathbf{h}, \mathbf{m}), ( \mathbf{h} -  \mathbf{e_{i, j}} + \mathbf{e_{i-1, j+1}}, \mathbf{m})} = \alpha i h_{i, j}, \, \, i \geq 1, j \geq 0\\
    &q_{ ( \mathbf{h}, \mathbf{m}), ( \mathbf{h} -  \mathbf{e_{i, j}} + \mathbf{e_{i + k, j+1}}, \mathbf{m})} = \frac{\beta p \nu^k}{(\nu + 1)^{k+1}}  \frac{m_1}{P_M} h_{i, j}, \, \, i \geq 0, j \geq 0  \label{vivax_transition_rate_mos_human}\\
    &q_{( \mathbf{h}, \mathbf{m}), ( \mathbf{h}, \mathbf{m} - \mathbf{e_1} + \mathbf{e_{0}})} = g m_1\\
    &q_{( \mathbf{h}, \mathbf{m}), ( \mathbf{h}, \mathbf{m} - \mathbf{e_0} + \mathbf{e_{1}})}  = \beta q \Big( \sum^\infty_{i=0} \sum^\infty_{j=1} \frac{h_{i,j}}{P_H} \Big) m_0  \label{vivax_transition_rate_human_mos}
\end{align}
which can be understood as follows:
\begin{itemize}
    \item $q_{ ( \mathbf{h}, \mathbf{m}), ( \mathbf{h} -  \mathbf{e_{i, j}} + \mathbf{e_{i, j-1}}, \mathbf{m})}$: a human with hypnozoite reservoir size $i$ and MOB $j$ clears a single brood.
    \item $q_{ ( \mathbf{h}, \mathbf{m}), ( \mathbf{h} -  \mathbf{e_{i, j}} + \mathbf{e_{i-1, j}}, \mathbf{m})}$: a single hypnozoite dies in a human with hypnozoite reservoir size $i$ and MOB $j$.
    \item $q_{ ( \mathbf{h}, \mathbf{m}), ( \mathbf{h} -  \mathbf{e_{i, j}} + \mathbf{e_{i-1, j+1}}, \mathbf{m})}$: a single hypnozoite activates, thereby giving rise to a relapse (that is, a blood-stage infection with one additional brood), in a human with hypnozoite reservoir size $i$ and MOB $j$.
    \item $q_{ ( \mathbf{h}, \mathbf{m}), ( \mathbf{h} -  \mathbf{e_{i, j}} + \mathbf{e_{i + k, j+1}}, \mathbf{m})}$: a human with hypnozoite reservoir size $i$ and MOB $j$ gains an additional batch of $k$ hypnozoites, in addition to a primary infection (that is, a blood-stage infection with one additional brood) through the bite of an infected mosquito.
    \item $q_{( \mathbf{h}, \mathbf{m}), ( \mathbf{h}, \mathbf{m} - \mathbf{e_1} + \mathbf{e_{0}})}$: an infected mosquito dies to give rise to uninfected progeny.
    \item $q_{( \mathbf{h}, \mathbf{m}), ( \mathbf{h}, \mathbf{m} - \mathbf{e_0} + \mathbf{e_{1}})}$: an uninfected mosquito is infected by taking a bloodmeal from an infected human.
\end{itemize}

We set the ratio of the human and mosquito population sizes to be $v := \frac{P_M}{P_H}$, allowing us to write the transition rates given by Equations (\ref{vivax_transition_rate_human_recov}) to (\ref{vivax_transition_rate_human_mos}) in the form
\begin{align*}
    q_{( \mathbf{h}, \mathbf{m} ), ( \mathbf{h} + J_h, \mathbf{m} + J_m) } = P_M \cdot \omega_{(J_h, J_m)} \Big( \frac{\mathbf{h}}{P_M}, \frac{\mathbf{m}}{P_M} \Big) 
\end{align*}
where $\omega_{J_h, J_m}: \mathcal{R} \to \mathbbm{R}$ are given by
\begin{align}
    &\omega_{ (\mathbf{e_{i, j-1}} - \mathbf{e_{i, j},0)} }(\mathbf{H}, \mathbf{M}) = \gamma j H_{i, j}, \, \, i \geq 0, j \geq 1 \label{vivax_alpha_human_recov}\\
    &\omega_{ (\mathbf{e_{i-1, j}} -  \mathbf{e_{i, j}},0) } (\mathbf{H}, \mathbf{M}) = \mu i H_{i, j}, \, \, i \geq 1, j \geq 0\\
    &\omega_{( \mathbf{e_{i-1, j+1} - \mathbf{e_{i, j}} }, 0)}(\mathbf{H}, \mathbf{M}) = \alpha i H_{i, j}, \, \, i \geq 1, j \geq 0\\
    &\omega_{ ( \mathbf{e_{i + k, j+1}} -  \mathbf{e_{i, j}}, 0)}(\mathbf{H}, \mathbf{M}) = \frac{\beta p \nu^k}{(\nu + 1)^{k+1}}  M_1 H_{i, j}, \, \, i \geq 0, j \geq 0 \\
    &\omega_{( 0, \mathbf{e_{0}}- \mathbf{e_1} )}(\mathbf{H}, \mathbf{M}) = g M_1\\
    &\omega_{( 0, \mathbf{e_{1}}- \mathbf{e_0} )}(\mathbf{H}, \mathbf{M})  = \beta q v \Big( \sum^\infty_{i=0} \sum^\infty_{j=1} H_{i,j} \Big) M_0. \label{vivax_alpha_human_mos}
\end{align}
As such, we have defined a density-dependent Markov process with size parameter $P_M$.

\subsection{Steady state behaviour} \label{sec::vivax_markov_extinction}

The disease-free state is necessarily absorbing. However, since we have a density-dependent Markov population process with countably many types, absorption in the disease-free state is not guaranteed {\it a priori} since the mean MOB and/or hypnozoite reservoir size in the population can drift to infinity \parencite{luchsinger2001stochastic}.\\

In Theorem \ref{proof::vivax_markov_extinction} below, we show that the process does, in fact, reach the disease free state with probability one. To do this, we adopt the state description 
$(\mathbf{u}(t), \mathbf{v}(t))$ whereby each human is labelled with a hypnzoite/MOB state, and each mosquito is assigned a binary infection state.\\

If the arrivals of blood stage infections and batches of hypnozoites to the humans and infections to the mosquitoes were independent Poisson processes, then the the process $(\mathbf{u}(t),\mathbf{v}(t))$ could be regarded as a modelling $P_H$ independent batch arrival infinite server queueing processes with a structure as in Figure \ref{fig:vivax_queue} and $P_M$ independent on-off processes  whose $s^{th}$ component tells us whether the $s$th mosquito is infected or not.\\

However this is not quite the case. An `arrival' of an infection to a human depends on the number of infected mosquitoes and conversely, an `arrival' of an infection to a mosquito depends on the number of infected humans.\\

To overcome this, we couple the actual process $(\mathbf{u}(t),\mathbf{v}(t))$ with a second process $(\mathbf{u}'(t),\mathbf{v}'(t))$ in which humans (mosquitoes) become infected at constant rates independent of the number of mosquitoes (humans) that are infected. Specifically, in the process $(\mathbf{u}'(t),\mathbf{v}'(t))$, in modelling the rate at which humans become infected, we assume that all the mosquitoes are infected all of the time and, in modelling the rate at which mosquitoes become infected, we assume that all the humans are infected all of the time.\\

This amount of infection in the process $(\mathbf{u}'(t),\mathbf{v}'(t))$ can be shown to dominate the amount of infection in $(\mathbf{u}(t),\mathbf{v}(t))$. Furthermore, we can recognise $(\mathbf{u}'(t),\mathbf{v}'(t))$ as an independent network of infinite server queues for which a stability criterion is known, and hence establish that this criterion must apply to $(\mathbf{u}(t),\mathbf{v}(t))$ as well.\\

The details are given below.\\

\begin{theorem} \label{proof::vivax_markov_extinction}
With probability one, disease is eventually eliminated, that is,
\begin{align*}
    \lim_{t \to \infty} (\mathbf{u}(t), \mathbf{v}(t)) = (\mathbf{0}, \mathbf{0}),
\end{align*}
and the time to disease elimination has finite expectation. 
\end{theorem}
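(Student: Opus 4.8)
The plan is to make precise the domination argument sketched above, whose crux is that the decoupled ``maximal-input'' process is positive recurrent \emph{for every choice of the rate parameters}, even when the true feedback dynamics are not. Write $\tilde W(\mathbf u,\mathbf v)$ for the total number of liver-stage hypnozoites, blood-stage broods and infected mosquitoes in the population: the disease-free state is precisely $\{\tilde W=0\}$, it is absorbing, and the object to bound is $\tau:=\inf\{t\ge 0:\tilde W(t)=0\}$. First I would build, on a common probability space via a Poisson-clock (graphical) representation, a coupling of $(\mathbf u(t),\mathbf v(t))$ with a process $(\mathbf u'(t),\mathbf v'(t))$ in which every human is bitten by an infectious mosquito at the \emph{constant} rate $\beta p$ (the value attained when all $P_M$ mosquitoes are infected) and every uninfected mosquito is infected at the \emph{constant} rate $\beta q$ (the value attained when all $P_H$ humans carry a blood-stage infection), with the per-individual activation/death/clearance clocks, the mosquito-recovery clocks and the hypnozoite batch sizes shared between the two processes. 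Since the true arrival intensities are pointwise dominated by these constants, a standard monotone coupling of infinite-server queues and on/off chains yields $\tilde W(\mathbf u(t),\mathbf v(t))\le\tilde W(\mathbf u'(t),\mathbf v'(t))$ for all $t$ from a common start, hence $\{t:\tilde W(\mathbf u'(t),\mathbf v'(t))\le K\}\subseteq\{t:\tilde W(\mathbf u(t),\mathbf v(t))\le K\}$ for every level $K$.

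By construction $(\mathbf u',\mathbf v')$ is a product of $P_H$ independent copies of the within-host batch-arrival queueing network of Figure \ref{fig:vivax_queue}, driven by the constant FORI $\lambda\equiv\beta p$, together with $P_M$ independent two-state on/off chains. The function $\tilde W$ then satisfies a geometric drift inequality $\mathcal A'\tilde W\le b-\varepsilon\tilde W$ for the generator $\mathcal A'$ of $(\mathbf u',\mathbf v')$ --- with, e.g., $\varepsilon=\min(\mu,\gamma,g)>0$ and $b=\beta p P_H(1+\nu)+\beta q P_M<\infty$, since hypnozoite death, brood clearance and mosquito recovery each remove one unit of $\tilde W$ at these per-capita rates, an activation leaves $\tilde W$ unchanged, and an infective bite adds $1+\nu$ units in expectation --- so $(\mathbf u',\mathbf v')$ is positive recurrent whatever the parameter values. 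A routine Dynkin-formula computation then gives, for the \emph{finite} set $F_K:=\{\tilde W\le K\}$ with any $K\ge 2b/\varepsilon$, the linear bound $\mathbb{E}_z[\,\text{first hitting time of }F_K\text{ by }(\mathbf u',\mathbf v')\,]\le\tilde W(z)/b$, and by the set inclusion above the same bound holds for the original process $(\mathbf u,\mathbf v)$.

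It remains to turn this into absorption. From an arbitrary state $y$ the disease-free state is reached by a finite string of positive-rate transitions --- let every hypnozoite die before it activates, clear every blood-stage brood, recover every infected mosquito, with no infective bite landing in the meantime --- so $\mathbb{P}_y(\tau\le1)>0$ for every $y$, and, minimising over the finite set $F_K$, $\delta:=\min_{y\in F_K}\mathbb{P}_y(\tau\le1)>0$. I would then run a regeneration argument: with $T_0$ the first entry of $(\mathbf u,\mathbf v)$ into $F_K$ and $T_{n+1}:=\inf\{t\ge T_n+1:(\mathbf u(t),\mathbf v(t))\in F_K\}$, the strong Markov property together with the bound above (applied from time $T_n+1$, at which $\tilde W$ has grown in expectation by at most $b$ because all bite rates are bounded) gives $\mathbb{E}[T_{n+1}-T_n\mid\mathcal F_{T_n}]\le C$ uniformly, while at every $T_n$ the state sits in $F_K$, so absorption occurs in $[T_n,T_n+1]$ with conditional probability at least $\delta$. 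Consequently the first index $N$ at which this occurs is stochastically dominated by a geometric variable, $\tau\le T_{N+1}+1$, and a Wald-type estimate $\mathbb{E}[T_{N+1}]=\mathbb{E}[T_0]+\sum_{k\ge 0}\mathbb{E}[(T_{k+1}-T_k)\mathbbm{1}_{\{N\ge k\}}]\le \mathbb{E}[T_0]+C\,\mathbb{E}[N+1]<\infty$ gives $\mathbb{E}[\tau]<\infty$; in particular $\tau<\infty$ almost surely, i.e.\ $(\mathbf u(t),\mathbf v(t))\to(\mathbf 0,\mathbf 0)$ a.s.

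The step I expect to be the main obstacle is the construction of the coupling: one must check that coordinatewise domination of the queue contents is preserved simultaneously through the batch arrivals \emph{and} through the shared service completions, and, jointly, the domination of the mosquito infection flags, even though the thinning probabilities that generate the true arrival streams --- $m_1(t)/P_M$ for bites on humans and the blood-stage prevalence for infections of mosquitoes --- are themselves functionals of the very coordinates being dominated. Everything downstream of the coupling is standard Foster--Lyapunov theory and renewal bookkeeping.
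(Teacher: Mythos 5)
Your proposal is correct, and its skeleton --- dominate the true process by a decoupled ``maximal-input'' process via a shared Poisson-clock coupling, show the dominating process returns to the disease-free state in finite expected time, and transfer this back through the domination --- is exactly the paper's. Where you genuinely diverge is in how the finite expected absorption time for the dominating process is established. The paper recognises $(\mathbf{u}',\mathbf{v}')$ as a product of $P_H$ independent batch-arrival infinite-server networks and $P_M$ independent two-state chains, invokes a known ergodicity result for such networks (finite mean occupation time and finite mean batch size), concludes positive recurrence of the product chain, and then bounds the hitting time of $(\mathbf{0},\mathbf{0})$ from an arbitrary state by a geometric-trials argument on excursions from the disease-free state. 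You instead run a self-contained Foster--Lyapunov argument with the Lyapunov function $\tilde W$ (total hypnozoites plus broods plus infected mosquitoes), obtaining a geometric drift $\mathcal{A}'\tilde W\le b-\varepsilon\tilde W$ with $\varepsilon=\min(\mu,\gamma,g)$, followed by a regeneration argument on the finite sublevel set $F_K$. Your route avoids the external citation and makes the mechanism of stability explicit (the drift computation is correct: activation is $\tilde W$-neutral, each removal channel fires at per-unit rate at least $\varepsilon$, and the input rate is bounded because the populations are finite); the paper's route avoids the Dynkin/renewal bookkeeping by leaning on the product structure. Two small points: your constant per-human maximal bite rate should be $\beta p P_M/P_H$ rather than $\beta p$ (each of the $P_M$ mosquitoes bites at rate $\beta$, shared among $P_H$ humans), which only rescales $b$ and is immaterial; and the coupling step you flag as the main obstacle is handled in the paper exactly as you sketch it --- shared interaction clocks of rate $\beta(p+q)/P_H$ thinned into mosquito-to-human and human-to-mosquito events, with shared batch sizes, shared post-arrival time courses, and shared mosquito infectious periods, so that extra arrivals in the primed process never perturb the common individuals (this is where the non-interacting infinite-server structure is essential).
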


\begin{proof}

{To show that absorption in the disease-free state $(\mathbf{u}(t), \mathbf{v}(t))=(\mathbf{0}, \mathbf{0}$) occurs  almost surely, with a finite expected time to absorption, we adopt a coupling argument.}\\

{Consider an ensemble of $P_H$ \textit{independent} networks of infinite server queues, as defined in Section \ref{sec::vivax_queue_network}, each with homogeneous arrival rate $\beta p P_M/P_H$. Define the random vector $\mathbf{u'}(t) \in (\mathbbm{N} \times \mathbbm{N})^{P_H}$ to be the vector whose $r$th component is a vector encoding the numbers of individuals in compartment $H$, and the sum of individuals in compartments $A$ and $P$ of network or `human' $r$ at time $t$. Further, consider $P_M$ \textit{independent} continuous-time Markov chains, each with state space $\{ 0, 1\}$ and transition rate matrix
\begin{align*}
   Q:= \begin{pmatrix} -\beta q & \beta q \\ g & -g \end{pmatrix}
\end{align*}
and let the vector $\mathbf{v'}(t) \in \{0, 1\}^{P_M}$ denote the state of each chain or `mosquito' $s=1, \dots, P_M$ at time $t$.}\\ 

{We generate a coupling of the processes $\{(\mathbf{u}(t), \mathbf{v}(t)): t \geq 0 \}$ and $\{(\mathbf{u}'(t), \mathbf{v}'(t)): t \geq 0 \}$ as follows. Consider $P_M \times P_H$ independent homogeneous Poisson processes $\{ B_{rs}(t): t \geq 0 \}$, each of rate $\beta (p+q)/P_H$. We take $B_{rs}(t)$ to govern the sequence of interaction times between human $r$ and mosquito $s$ under both processes $\{(\mathbf{u}(t), \mathbf{v}(t)): t \geq 0 \}$ and $\{(\mathbf{u'}(t), \mathbf{v'}(t)): t \geq 0 \}$. The consequences of each human/mosquito interaction, however, can vary:
\begin{itemize}
    \item With probability $p/(p+q)$, a point in the Poisson process $B_{rs}(t)$ models a potential transmission event from mosquito $s$ to human $r$. If mosquito $s$ is infected at time $t$ in the process $\{(\mathbf{u}(t), \mathbf{v}(t)): t \geq 0 \}$, that is, $v_s(t) = 1$, then there is a coincident mosquito-to-human transmission event to human $r$ across both processes $\{(\mathbf{u}(t), \mathbf{v}(t)): t \geq 0 \}$ and $\{(\mathbf{u'}(t), \mathbf{v'}(t)): t \geq 0 \}$, with an equal hypnozoite batch size and an identical time course for each inoculated hypnozoite/primary infection. If, in contrast, $v_s(t) = 0$, then there is an arrival of a geometrically-distributed hypnozoite batch and a single primary infection (with an associated time course) into human $r$ under the process $\{(\mathbf{u'}(t), \mathbf{v'}(t)): t \geq 0 \}$ only.
    \item Otherwise, the point in the Poisson process $B_{rs}(t)$ models a potential transmission event from human $r$ to mosquito $s$. If human $r$ is blood-stage infected at time $t$ in the process $\{(\mathbf{u}(t), \mathbf{v}(t)): t \geq 0 \}$, that is, $(u_r)_2(t) \geq 1$, then mosquito $s$ immediately enters the infected state, where it remains for a common exponentially-distributed period of mean length $1/g$ under both processes $\{(\mathbf{u}(t), \mathbf{v}(t)): t \geq 0 \}$ and $\{(\mathbf{u'}(t), \mathbf{v'}(t)): t \geq 0 \}$. If, in contrast, $(u_r)_2(t) = 0$, then mosquito $s$ enters the infected state in the process $\{(\mathbf{u'}(t), \mathbf{v'}(t)): t \geq 0 \}$ only.
\end{itemize}}

{Under this setting, we necessarily have
\begin{align*}
    (\mathbf{u}(0), \mathbf{v}(0)) \leq  (\mathbf{u}'(0), \mathbf{v}'(0)) \implies (\mathbf{u}(t), \mathbf{v}(t)) \leq  (\mathbf{u}'(t), \mathbf{v}'(t)) \text{ for all } t \geq 0.
\end{align*}
In particular,
\begin{align*}
    (\mathbf{u}'(t), \mathbf{v}'(t)) = (\mathbf{0}, \mathbf{0}) \implies  (\mathbf{u}(t), \mathbf{v}(t)) = (\mathbf{0}, \mathbf{0}),
\end{align*}
and as such, the hitting time
\begin{align*}
    T_{(\mathbf{u_0}, \mathbf{v_0})} :=  \inf \big\{ \tau \geq 0: (\mathbf{u}'(\tau), \mathbf{v}'(\tau)) = (\mathbf{0}, \mathbf{0}) \, | \, (\mathbf{u}'(0), \mathbf{v}'(0)) = (\mathbf{u_0}, \mathbf{v_0}) \big\}
\end{align*}
yields an upper bound for the time to absorption in the disease-free state under the process $\{(\mathbf{u}(t), \mathbf{v}(t)): t \geq 0 \}$ with initial condition $(\mathbf{u}(0), \mathbf{v}(0)) = (\mathbf{u_0}, \mathbf{v_0})$.}\\

{Using Corollary 4.1.1 of \textcite{mehra2023open}, since the expected network occupation time for each hypnozoite/primary infection is finite and the hypnozoite batch size has finite mean, each component $\{ u'_r(t): t \geq 0 \}$, $r=1, \dots, P_H$ is ergodic. Further, since each Markov chain $\{ v'_s(t): t \geq 0 \}$, $s=1, \dots, P_M$ is irreducible and possesses a finite state space, it is also ergodic.}\\

{Each component $\{ u'_r(t): t \geq 0 \}$ and $\{ v'_s(t): t \geq 0 \}$ is positive recurrent and thus possesses a stationary distribution. The components evolve independently, therefore the stationary distribution for the multidimensional product $\{ (\mathbf{u'}(t), \mathbf{v'}(t)): t \geq 0\}$ should be given by the product of stationary distributions of each individual component. Since $\{ (\mathbf{u'}(t), \mathbf{v'}(t)): t \geq 0\}$ is non-explosive, it follows that it must be positive recurrent. This establishes that each state $(\mathbf{u_0}, \mathbf{v_0}) \in \chi'$ is positive recurrent and the return time to the disease-free state has finite expectation $\EX[T_{(\mathbf{0}, \mathbf{0})}] < \infty$.}\\

{Any state $(\mathbf{u_0}, \mathbf{v_0}) \in \chi'$ can be reached from the disease-free state $(\mathbf{0}, \mathbf{0})$ with positive probability $p(\mathbf{u_0}, \mathbf{v_0}) > 0$ \textit{prior} to return to the disease-free state through a concerted series of mosquito-inoculation events, with appropriate constraints on the time course of each infection and mosquito lifetimes. Consequently,
\begin{align*}
    \EX \big[ T_{(\mathbf{u_0}, \mathbf{v_0})} \big] \leq \sum^\infty_{n=1} p(\mathbf{u_0}, \mathbf{v_0}) \big( 1- p(\mathbf{u_0}, \mathbf{v_0}) \big)^{n-1} \cdot n \EX[T_{(\mathbf{0}, \mathbf{0})}] = \frac{\EX[T_{(\mathbf{0}, \mathbf{0})}]}{p(\mathbf{u_0}, \mathbf{v_0})} < \infty,
\end{align*}
that is, the expected hitting time for the disease-free state under the process $\{ (\mathbf{u'}(t), \mathbf{v'}(t)): t \geq 0\}$ has finite mean, irrespective of the initial condition $(\mathbf{u_0}, \mathbf{v_0}) \in \chi'$.}\\

{Since the time to absorption in the disease free state under the process $\{ (\mathbf{u}(t), \mathbf{v}(t)): t \geq 0\}$ with intial condition $(\mathbf{u_0}, \mathbf{v_0}) \in \chi'$ is bounded above by $T_{(\mathbf{u_0}, \mathbf{v_0})}$, it immediately follows that
\begin{align*}
    \lim_{t \to \infty} (\mathbf{u}'(t), \mathbf{v}'(t)) = (\mathbf{0}, \mathbf{0}) \text{ a.s } \implies \lim_{t \to \infty} (\mathbf{u}(t), \mathbf{v}(t)) = (\mathbf{0}, \mathbf{0}) \text{ a.s.},
\end{align*}
and that the time to disease elimination, moreover, has finite expectation.}\end{proof}

\subsection{A functional law of large numbers} \label{sec::vivax_lln_result}

We now use the work of \textcite{barbour2012law} to obtain a FLLN for the density-dependent Markov population process. In Theorem \ref{theorem::flln} below, we show that the sample paths of the continuous time Markov chain in converge to the solution of
\begin{align}
    \frac{d (\mathbf{H}, \mathbf{M})}{dt} = \sum_{J \in \mathcal{J}} \omega_J(\mathbf{H}, \mathbf{M}) \label{vivax_semilinear}
\end{align}
in the limit $P_M \to \infty$. Here, we define the $\eta$-norm
\begin{align*}
    || ( \mathbf{H},  \mathbf{M} ) ||_{\eta} := |M_0| + |M_1| + \sum^\infty_{i=0} \sum^\infty_{j=0} (i+j+1) |H_{i,j}|.
\end{align*}
and the set
\begin{align*}
    \mathcal{R}_\eta := \{ (\mathbf{H}, \mathbf{M}) \in 
    (\mathbbm{R}_{+})^{\mathbbm{
    N}} \times (\mathbbm{R}_{+})^2:  || ( \mathbf{H},  \mathbf{M} ) ||_{\eta} < \infty \}.
\end{align*}

To show that the the semilinear problem (\ref{vivax_semilinear}) with initial condition $(\mathbf{H}(0), \mathbf{M}(0)) \in \mathcal{R}_\eta$ has a unique mild solution in the interval $[0,\infty)$, we use the results of \textcite{barbour2012law} (which draw on Theorem 1.4, Chapter 6 of \textcite{pazy2012semigroups}), bounding the $\eta$-norm of $( \mathbf{H},  \mathbf{M} )$ using the expected occupancy of nodes $H$, $A$ and $P$ of the network introduced in Section \ref{sec::vivax_queue_network} (Appendix \ref{sec::weak_sol_exists}). To show the convergence of sample paths to the semilinear problem (\ref{vivax_semilinear}), we use Theorem 4.7 of \textcite{barbour2012law} after verifying a series of semigroup and transition rate assumptions (Appendices \ref{appendix::barbour_2.1} and \ref{appendix::barbour_4.2}).\\

\begin{theorem} \label{theorem::flln}
The semilinear problem (\ref{vivax_semilinear}) with initial condition $(\mathbf{H}(0), \mathbf{M}(0)) \in \mathcal{R}_\eta$ has a unique mild solution in the interval $[0,\infty)$. For each $T>0$, $\exists K_T^{(1)}, K_T^{(2)}, K_T^{(3)}$ such that if
\begin{align*}
    || P_M^{-1} ( \mathbf{h}(0), \mathbf{m}(0) ) - ( \mathbf{H}(0), \mathbf{M}(0) ) ||_{\eta} \leq K_T^{(1)} \sqrt{\frac{\log P_M}{P_M}} 
\end{align*}
for $P_M$ large enough, then
\begin{align*}
    P \Big[ \sup_{0 \leq t \leq T}  || P_M^{-1} (\mathbf{h}(t), \mathbf{m}(t)) -  (\mathbf{H}(t), \mathbf{M}(t)) ||_{\eta} > K_T^{(2)} \sqrt{\frac{\log P_M}{P_M}} \Big] \leq K_T^{(3)} \frac{\log{P_M}}{P_M}.
\end{align*}
\end{theorem}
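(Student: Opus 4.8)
The plan is to recast the infinite system (\ref{vivax_semilinear}) in the abstract semilinear form $\dot{x} = Ax + F(x)$ on the Banach space $(\mathcal{R}_\eta, ||\cdot||_\eta)$ and then invoke the general theory of \textcite{barbour2012law}. I would let the operator $A$ collect all the \emph{linear}, infection-independent transitions --- hypnozoite activation and death, relapse and primary-infection clearance, and mosquito recovery, i.e.\ Equations (\ref{vivax_alpha_human_recov})--(\ref{vivax_alpha_human_mos}) with the bilinear factors $M_1$ and $\sum_{i,j}H_{i,j}$ frozen --- while $F$ collects the two genuinely bilinear terms: the batch-inoculation flux $\tfrac{\beta p\nu^k}{(\nu+1)^{k+1}}M_1H_{i,j}$ and the human-to-mosquito flux $\beta q v(\sum_{i,j}H_{i,j})M_0$. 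First I would check that $A$ generates a strongly continuous contraction semigroup $\{S(t)\}_{t\ge 0}$ on $\mathcal{R}_\eta$: this semigroup is precisely the (sub-stochastic) transition semigroup of a single tagged hypnozoite or primary infection moving through the queues of Figure \ref{fig:vivax_queue}, tensored with the two-state mosquito-recovery chain, and the choice of weights $(i+j+1)$ in $||\cdot||_\eta$ is such that every internal transition changes a unit's weight by $0$ (activation, $i\!\to\!i-1$, $j\!\to\!j+1$) or by $-1$ (death or clearance); hence $A$ is dissipative and Lumer--Phillips applies.

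Next I would prove the existence-and-uniqueness assertion. Because the geometric batch law (\ref{eq:bite_pmf}) has finite mean $\nu$, one gets $\sum_{k\ge 0}\tfrac{\nu^k}{(\nu+1)^{k+1}}(i+k+j+2) = i+j+2+\nu$, so $F$ maps $\mathcal{R}_\eta$ into itself with $||F(x)||_\eta \le C\,||x||_\eta^2$ and is Lipschitz on $||\cdot||_\eta$-bounded sets, the bound being $||F(x)-F(y)||_\eta \le C(||x||_\eta+||y||_\eta)||x-y||_\eta$. The semilinear existence theory used in \textcite{barbour2012law} (Chapter 6, Theorem 1.4 of \textcite{pazy2012semigroups}) then yields a unique local mild solution $x(t) = S(t)x(0) + \int_0^t S(t-s)F(x(s))\,ds$. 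To extend it to $[0,\infty)$ I would establish an a priori bound from the conservation laws of (\ref{vivax_semilinear}): the total human mass $\sum_{i,j}H_{i,j}$ and the total mosquito mass $M_0+M_1$ are invariant, so $M_1$ stays bounded, and therefore $\tfrac{d}{dt}\sum_{i,j}(i+j+1)H_{i,j} \le \beta p(\nu+1)\,M_1\sum_{i,j}H_{i,j}$ is bounded by a constant; thus $||(\mathbf H,\mathbf M)(t)||_\eta$ grows at most linearly in $t$ and no finite-time blow-up occurs. This is the step referred to in the text as bounding the $\eta$-norm via the expected occupancy of nodes $H$, $A$ and $P$.

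For the quantitative convergence I would apply Theorem 4.7 of \textcite{barbour2012law}, verifying its hypotheses in turn: (i) the semigroup and phase-space assumptions of their Section 2, which are met by the contraction semigroup $\{S(t)\}$ above on the closed set $\mathcal{R}_\eta$; (ii) the transition-rate assumptions of their Section 4.2, which reduce to controlling, uniformly on $||\cdot||_\eta$-bounded sets, the total jump rate $\sum_J\omega_J(x)$, the first jump moment $\sum_J||J||_\eta\,\omega_J(x)\le C(1+||x||_\eta)$, the second jump moment $\sum_J||J||_\eta^2\,\omega_J(x)$, and the local Lipschitz continuity of $x\mapsto\sum_J J\,\omega_J(x)$ together with these moment sums. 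The only jump directions of unbounded $||\cdot||_\eta$-size are the batch inoculations $\mathbf e_{i+k,j+1}-\mathbf e_{i,j}$, of size of order $k$, and since $\sum_k k^2\tfrac{\nu^k}{(\nu+1)^{k+1}}<\infty$ all the required moments are finite, with at most quadratic dependence on the state. Feeding in the a priori bound from the previous paragraph (which confines $x(\cdot)$, and with high probability the rescaled chain $P_M^{-1}(\mathbf h,\mathbf m)$, to a bounded ball of $\mathcal{R}_\eta$ on $[0,T]$ via a stopping-time argument) then produces exactly the stated $\sqrt{\log P_M/P_M}$ fluctuation bound holding with failure probability $O(\log P_M/P_M)$.

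The main obstacle I anticipate is not any single estimate but two structural points: first, verifying that $A$ generates a semigroup with the precise regularity that \textcite{barbour2012law} demand on the \emph{weighted} space $\mathcal{R}_\eta$ (rather than mere $C_0$-generation on some unweighted space), since the weights couple the hypnozoite and MOB coordinates; and second, the careful bookkeeping of the countably many jump directions when checking the second-moment condition $\sum_J||J||_\eta^2\,\omega_J(x)<\infty$ --- this is where the countable type space, as opposed to a finite one, makes the argument delicate, and where the finiteness of all moments of the geometric batch law is used essentially.
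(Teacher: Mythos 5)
Your proposal follows the same skeleton as the paper's argument: the same splitting of (\ref{vivax_semilinear}) into a linear part $A$ (activation, death, clearance, mosquito recovery) and a bilinear perturbation $F$ (batch inoculation and human-to-mosquito transmission), the same weights $\eta^{(h)}_{i,j}=i+j+1$, local Lipschitz continuity of $F$ on $\eta$-bounded sets, Theorem 1.4 of Chapter 6 of \textcite{pazy2012semigroups} for local existence, and Theorem 4.7 of \textcite{barbour2012law} (with the moment conditions on the countably many jump directions controlled by the finiteness of all moments of the geometric batch law) for the quantitative convergence. The two places where you genuinely deviate are both acceptable but worth flagging. First, you generate the semigroup by a direct dissipativity/Lumer--Phillips computation on the weighted space (noting that each internal transition changes a unit's $\eta$-weight by $0$ or $-1$), whereas the paper verifies Assumptions 3.1--3.2 of \textcite{barbour2012law} with $\mu=\eta$ and $w=\max\{\alpha+\mu,\gamma,g\}$ and uses their Theorem 3.1, which constructs $R(t)$ from the minimal process of an auxiliary $Q$-matrix; your route is more self-contained, theirs avoids re-proving generation on the weighted space. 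Second, and more substantively, your a priori bound ruling out finite-time blow-up is a differential inequality driven by conservation of total human and mosquito mass, giving an $\eta$-norm growing at most linearly in $t$; the paper instead identifies Equation (\ref{vivax_hybrid_dhk}) with the Kolmogorov forward equations of the queueing network of Section \ref{sec::vivax_queue_network}, bounds the FORI by $\beta p v$, and reads off a \emph{uniform-in-time} bound on the $\eta$-norm from the generating function (\ref{vivax_H_queue}) -- so your parenthetical identifying your step with "the expected occupancy of nodes $H$, $A$ and $P$" is not quite right; that phrase describes the paper's queueing bound, not your mass-conservation argument. Either bound suffices for $t_{\max}=\infty$, but in your version you should justify that the formal conservation identities (positivity and invariance of $\sum_{i,j}H_{i,j}$ and of $M_0+M_1$) actually hold for the mild solution on $[0,t_{\max})$ before differentiating the $\eta$-norm; this is standard but not free on an infinite-dimensional weighted $\ell^1$ space.
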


\section{A deterministic population-level model} \label{sec::vivax_hybrid}

We devote the present section to the analysis of the semilinear system given by Equation (\ref{vivax_semilinear}), which arises as the FLLN limit \parencite{barbour2012law} for the density-dependent Markov population process detailed in Section \ref{sec::vivax_markov}.

\subsection{A countable system of ODEs} \label{sec::lln_ode}

 It is instructive to write out the components of the vector equation (\ref{vivax_semilinear}), rescaled to consider the \textit{proportion} of humans in each hypnozoite/MOB state. At time $t$, define
\begin{itemize}
    \item $H_{i,j}(t)$ to be the proportion of humans with hypnozoite reservoir size $i$ and MOB $j$
    \item $I_M(t)$ to be the proportion of infected mosquitoes.
\end{itemize}
Then we obtain the the infinite-dimensional system of ODEs
\begin{align}
    \frac{d H_{i, j}}{dt} =& - a I_M H_{i, j}  + a I_M \sum^{i}_{k=0} H_{k, j-1} \frac{1}{\nu + 1} \Big( \frac{\nu}{\nu + 1} \Big)^{i-k} \notag \\
    & - i (\mu + \alpha) H_{i, j} - j \gamma H_{i, j}  + (i+1) \mu H_{i+1, j} + (i+1) \alpha H_{i+1, j-1} + (j+1) \gamma H_{i, j+1} \label{vivax_hybrid_dhk}\\
    \frac{d I_M}{dt} =&  - g I_M + b \Big( \sum^\infty_{i=0} \sum^\infty_{j=1} H_{i,j} \Big) (1 - I_M) \label{vivax_hybrid_dim}
\end{align}
where, for notational convenience, we have set
\begin{align*}
    a = \frac{P_M}{P_H} \beta p \qquad 
    b = \beta q.
\end{align*}

The case $\nu=0$ yields the classical model of superinfection for \textit{P. falciparum}, as formulated by \textcite{bailey1957} (Appendix \ref{sec::bailey}).\\

Equation (\ref{vivax_hybrid_dhk}) constitutes a deterministic compartmental model for a vector that has total `mass' one. This invites us to draw a parallel with the Kolmogorov forward differential equations for a continuous-time Markov chain model with state space $\mathbbm{Z}^2$, where $H_{i,j}(t)$ has the interpretation that the Markov chain is in state $(i, j)$ at time $t$. Under the assumption that $aI_M(t)$ is the rate function of an inhomogeneous Poisson process, this model is precisely the open network of infinite server queues governing \textit{within-host} superinfection and hypnozoite dynamics that we discussed in Section \ref{sec::vivax_queue_network}, and Equation (\ref{vivax_hybrid_dhk}) is indeed the forward Kolmogorov differential equation for this Markov chain.\\

Rather than solving the system of ODEs (\ref{vivax_hybrid_dhk}) directly for some time dependent function $I_M(t)$, we can draw on a physical understanding of the queueing network to derive a generating function for the time dependent state probabilites $H_{i,j}(t)$ as a function of $I_M(\tau)$, $\tau \in [0, t)$. In particular, from Equation (\ref{vivax_multi_pgf}), given the absence of liver- and blood-stage infection in the human population at time zero, that is,
\begin{align*}
    H_{0,0}(0) = 1, \, H_{i,j}(0) = 0 \text{ for all } i + j >0
\end{align*}
it follows that
\begin{align}
    H(x, y, t) &:= \sum^{\infty}_{i=0} \sum^{\infty}_{j=0} H_{i, j}(t) x^i y^j \notag \\
    &= \exp \Big\{ -  a \int^{t}_0 I_M(\tau) \Big( 1 - \frac{\ 1 + (y-1) e^{-\gamma (t-\tau) }  }{1 + \nu (1 - x) p_H(t-\tau) + \nu (1 - y) p_A(t-\tau) } \Big) d \tau \Big\}, \label{vivax_H_queue}
\end{align}
where Equation (\ref{vivax_H_queue}) is guaranteed to converge in the domain $(x, y) \in [0, 1]^2$ for any fixed $t$.\\

We can exploit the functional dependence (\ref{vivax_H_queue}) between the generating function for the state probabilities $H_{i,j}(t)$ of the human population at time $t$, and the proportion of infected mosquitoes $I_M(\tau)$ for $\tau \in [0, t)$ to aid analysis of the infinite-compartment model given by Equations (\ref{vivax_hybrid_dhk}) and (\ref{vivax_hybrid_dim}).

\subsection{A reduced integrodifferential equation} \label{sec::lln_ide}
Upon examination of the deterministic model structure, we note that the coupling between human and mosquito dynamics is completely determined by the time evolution of
\begin{itemize}
    \item the prevalence of blood-stage infection in the human population (that is, humans with MOB at least one), which can be written
    \begin{align*}
        I_H(t) := \sum^\infty_{i=0} \sum^\infty_{j=1} H_{i, j}(t);
    \end{align*}
    \item the FORI $a I_M(t)$.
\end{itemize}

The utility of our observation in Section \ref{sec::lln_ode} emerges in the characterisation of the functional dependence between $I_H(t)$ and the FORI $a I_M(\tau)$ for $\tau \in [0, t)$. Specifically, given
\begin{align}
    H_{0,0}(0) = 1, \, H_{i,j}(0) = 0 \text{ for all } i + j >0, \label{ic::no_human_inf}
\end{align}
we can readily write
\begin{align}
    I_H(t) &= 1 - H(1, 0, t) = 1 - \exp \Big\{ -a \int^t_0 I_M(\tau) \Big( \frac{e^{-\gamma (t-\tau)} + \nu p_A(t-\tau)}{1 + \nu p_A(t-\tau)} \Big) d \tau \Big\}\label{eq:prev}
\end{align}
using the generating function (\ref{vivax_H_queue}).\\

Substituting the integral equation (\ref{eq:prev}) into the ODE (\ref{vivax_hybrid_dim}) yields the IDE
\begin{align}
    \frac{dI_M}{dt} &= -g I_M + b (1 - I_M) \bigg( 1 - \exp \Big\{ -a \int^t_0 I_M(\tau) \Big( \frac{e^{-\gamma (t-\tau)} + \nu p_A(t-\tau)}{1 + \nu p_A(t-\tau)} \Big) d \tau \Big\} \bigg). \label{ide_im}
\end{align}
When the human population is initially blood- and liver-stage infection-naive (that is, initial condition (\ref{ic::no_human_inf}) holds), the time evolution of the FORI $aI_M(t)$ is equivalent under the IDE (\ref{ide_im}), and the infinite-dimensional system of ODEs given by Equations (\ref{vivax_hybrid_dhk}) and (\ref{vivax_hybrid_dim}).\\

As a function of the FORI $a I_M(\tau)$, $\tau \in [0, t)$ which solves the IDE (\ref{ide_im}), we can use the generating function (\ref{vivax_H_queue}) to recover the population-level distribution of hypnozoite and superinfection states in the human population at time $t$. The recurrence relations provided in \textcite{mehra2023open} can be used to obtain the proportion of humans $H_{i,j}(t)$ in each hypnozoite/MOB state. The time-evolution of various quantities of biological interest can be recovered using the formulae provided in \textcite{mehra2022hypnozoite}.

\subsection{The steady state distribution} \label{sec::lln_ss}

Steady state analysis for the countably infinite system of ODEs given by Equations (\ref{vivax_hybrid_dhk}) and (\ref{vivax_hybrid_dim}) is not straightforward. Instead, we characterise the existence and asymptomatic stability of equilibria for the IDE (\ref{ide_im}).\\

Upon inspection, we find that an equilibrium solution $I_M^*$ of the IDE (\ref{ide_im}) must satisfy the integral equation
\begin{align}
    \frac{g I^*_M}{b(1-I^*_M )} = 1 - \exp \Big\{ -a I^*_M  \int^\infty_0 \Big( \frac{e^{-\gamma \tau} + \nu p_A(\tau)}{1 + \nu p_A(\tau)} \Big) d \tau \Big\}. \label{ide_im_ss}
\end{align}
The disease-free equilibrium $I_M^*=0$ clearly satisfies Equation (\ref{ide_im_ss}). On the domain $I_M^* \in [0, 1]$, we note that the LHS of (\ref{ide_im_ss}) is a monotonically increasing, convex function of $I_M^*$ that approaches infinity in the limit $I_M^* \to 1$ while the RHS is monotonically increasing, concave. The existence of a second solution to Equation (\ref{ide_im_ss}) therefore depends on the limiting slope as $I_M \to 0$. It exists if and only if the derivative of the LHS is less than the derivative of the RHS at $I_M^*=0$ (see Figure \ref{fig::gradient_schematic}).\\

\begin{figure}[h]
    \centering
    \includegraphics[width=0.65\textwidth]{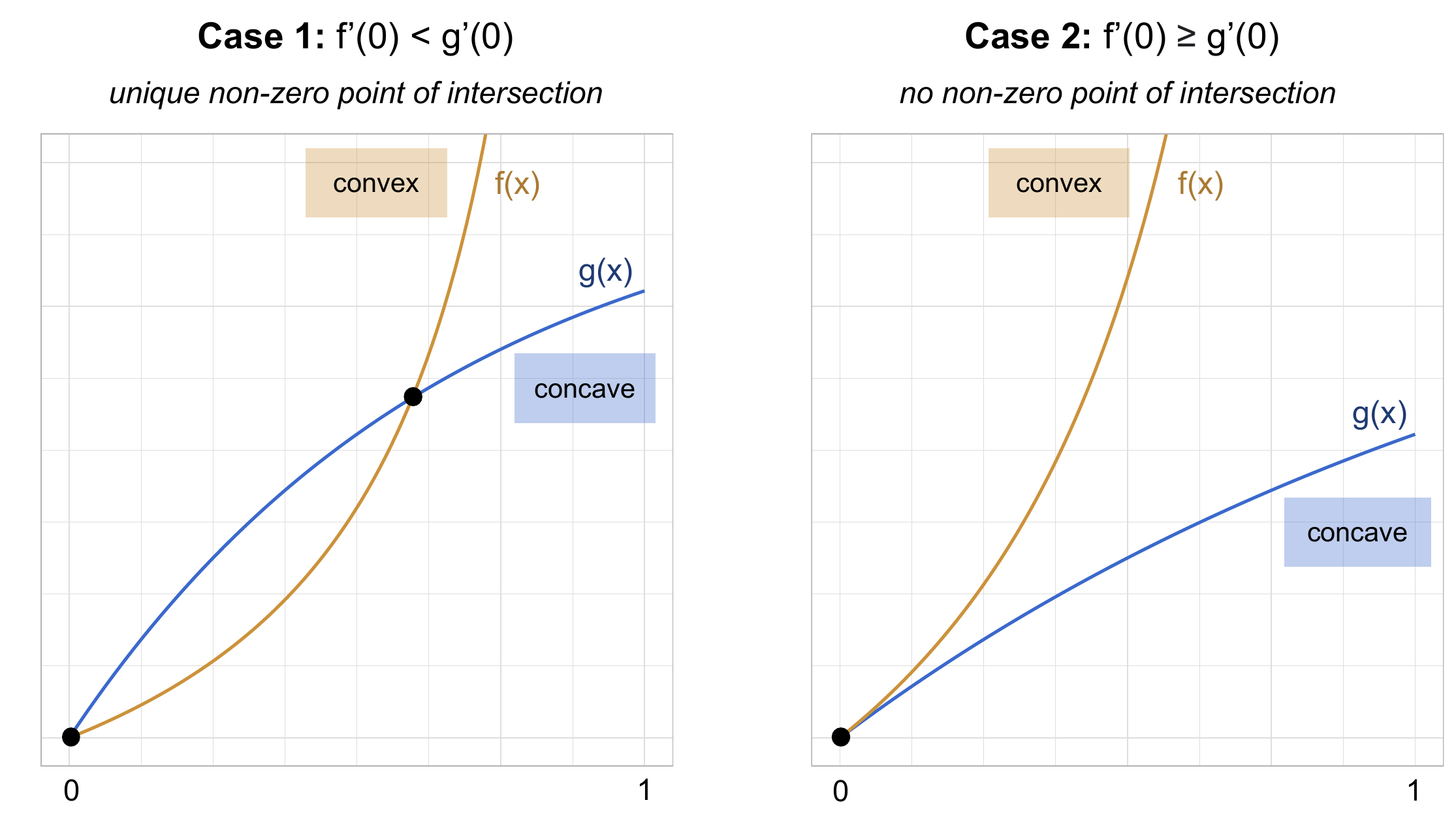}
    \caption{Schematic of the geometric argument to establish the existence of a non-zero solution to Equation (\ref{ide_im_ss}).}
    \label{fig::gradient_schematic}
\end{figure}

This allows us to identify a bifurcation parameter
\begin{align}
    R_0 := \sqrt{\frac{ab}{g} \Big( \int^\infty_0 \frac{e^{-\gamma \tau} + \nu p_A (\tau)}{1 + \nu p_A(\tau)} d \tau \Big)}, \label{eq:R0}
\end{align}
with $R_0>1$ a necessary and sufficient condition for the existence of an endemic equilibrium $I_M^*>0$.\\

We can readily interpret the integral expression in Equation (\ref{eq:R0}). The probability of a current blood-stage infection time $\tau$ after an infective bite is given by the integrand
\begin{align*}
    z(\tau) := 1 - \underbrace{ \vphantom{\frac{1}{\nu + p_A(\tau)}} (1 - e^{-\gamma \tau})}_{\substack{\text{no ongoing primary} \\ \text{infection}}} \cdot \underbrace{\frac{1}{1 + \nu p_A(\tau)}}_{\substack{\text{no ongoing relapses} \\ \text{(geometric batch)}}}.
\end{align*}
The integral
\begin{align*}
    \int^\infty_0 z(\tau) d \tau = \int^\infty_0 \frac{e^{-\gamma \tau} + \nu p_A (\tau)}{1 + \nu p_A(\tau)} d \tau
\end{align*}
therefore yields the expected cumulative duration of blood-stage infection attributable to a single infective bite. The functional form of the bifurcation parameter $R_0$ is analogous to that of a reproduction number \parencite{diekmann1990definition}.\\

Using the stability criterion of \textcite{brauer1978asymptotic}, whereby the IDE (\ref{ide_im}) is linearised about each equilibrium solution, we can also establish sufficient conditions for equilibrium solutions $I_M^*$ to be uniformly asymptotically stable. The threshold behaviour of the IDE (\ref{ide_im}) is summarised in Theorem \ref{theorem::ide_ss} below, with a proof provided in Appendix \ref{appendix:theorem_ide_ss}.

\vspace{3mm}

\begin{theorem}{(Threshold behaviour for the IDE (\ref{ide_im}))}
\begin{enumerate}[mode=unboxed]
    \item If $R_0 < 1$, then the disease-free equilibrium $I_M^* = 0$ is uniformly asymptotically stable and no endemic equilibrium solution exists.
    \item If $R_0 > 1$, there exists precisely one endemic equilibrium $I_M^* >0 $, given by the non-trivial solution to the equation
    \begin{align}
        1 - \frac{g I_M^*}{b(1-I_M^*)} - e^{-\frac{g}{b} R_0^2 I_M^*} = 0 \label{eq:ss_im*}
    \end{align}
     This endemic equilibrium $I^*_M>0$ is uniformly asymptotically stable if
    \begin{align*}
        I_M^* > \frac{1 + \frac{b}{b+g} - \sqrt{\big( 1 - \frac{b}{b+g} \big)^2 + 4 \frac{b}{b+g} \frac{1}{R_0^2} }}{2}.
    \end{align*}
\end{enumerate}
\end{theorem}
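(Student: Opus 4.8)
The plan is to treat the two parts separately, but both rest on the geometric picture already sketched: rewriting the equilibrium condition $(\ref{ide_im_ss})$ as an intersection problem between the convex increasing map $\phi(I):=gI/\big(b(1-I)\big)$ and the concave increasing map $\psi(I):=1-\exp\{-(g/b)R_0^2\,I\}$ on $[0,1]$, where I have used that $ab/g\cdot\int_0^\infty z(\tau)\,d\tau = R_0^2$ and hence $a\int_0^\infty z(\tau)\,d\tau = (g/b)R_0^2$, matching the exponent in $(\ref{eq:ss_im*})$. For the existence dichotomy (both halves of the theorem) I would first record $\phi(0)=\psi(0)=0$, $\phi'(0)=g/b$, $\psi'(0)=(g/b)R_0^2$, $\phi$ convex with $\phi(I)\uparrow\infty$ as $I\uparrow1$, and $\psi$ concave bounded by $1$. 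Then: if $R_0\le1$ we have $\phi'(0)\ge\psi'(0)$ and, since $\phi$ is convex while $\psi$ is concave, $\phi(I)>\psi(I)$ for all $I\in(0,1]$, so no endemic equilibrium exists; if $R_0>1$ then $\phi'(0)<\psi'(0)$ forces $\phi<\psi$ just to the right of $0$, while $\phi>\psi$ near $1$, so by the intermediate value theorem at least one root exists in $(0,1)$, and convex-minus-concave is strictly convex, hence has at most two roots on $[0,1)$; since $0$ is one root, the endemic root is unique. This disposes of all the existence claims.

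Next comes the stability analysis, which is the substantive part. Following Brauer's criterion for scalar IDEs of convolution type, I would linearise $(\ref{ide_im})$ about an equilibrium $I_M^*$. Writing $I_M(t)=I_M^*+\epsilon(t)$ and expanding to first order, the perturbation satisfies a linear Volterra integrodifferential equation
\begin{align*}
    \dot\epsilon(t) = -\Big(g + b\,\psi(I_M^*)\Big)\epsilon(t) + b\big(1-I_M^*\big)\,\psi'(I_M^*)\,a\!\int_0^t K(t-\tau)\,\epsilon(\tau)\,d\tau,
\end{align*}
where $K(s):=z(s)/\!\int_0^\infty z(\sigma)\,d\sigma$ is the normalised kernel (a probability density), so that $\psi$ and its derivative appear through the chain rule with the full exponential term. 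The associated characteristic equation is obtained by substituting $\epsilon(t)=e^{\lambda t}$, giving
\begin{align*}
    \lambda + g + b\,\psi(I_M^*) - b(1-I_M^*)\psi'(I_M^*)\,a\,\widehat{K}(\lambda) = 0,
\end{align*}
with $\widehat{K}(\lambda)=\int_0^\infty e^{-\lambda s}K(s)\,ds$ the Laplace transform. Brauer's theorem guarantees uniform asymptotic stability provided every root has negative real part, and the standard sufficient condition for that (since $K$ is a nonnegative kernel with $\widehat K(0)=1$ and $|\widehat K(\lambda)|\le 1$ for $\Re\lambda\ge0$) is that the ``worst case'' $\lambda=0$ already fails to be a root on the unstable side, i.e. $g+b\psi(I_M^*)-b(1-I_M^*)\psi'(I_M^*)\,a\,\int_0^\infty z(\sigma)d\sigma >0$, together with the sign condition $b(1-I_M^*)\psi'(I_M^*)a\int z \ge 0$ which holds automatically. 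Substituting $a\int_0^\infty z\,d\sigma=(g/b)R_0^2$ and $\psi'(I)=(g/b)R_0^2 e^{-(g/b)R_0^2 I}$, and using the equilibrium relation $e^{-(g/b)R_0^2 I_M^*}=1-\phi(I_M^*)=1-gI_M^*/(b(1-I_M^*))$ to eliminate the exponential, the inequality $g+b\psi(I_M^*) > b(1-I_M^*)\psi'(I_M^*)(g/b)R_0^2$ becomes, after clearing denominators, a quadratic inequality in $I_M^*$ whose positive root is exactly the threshold $\tfrac12\big(1+\tfrac{b}{b+g}-\sqrt{(1-\tfrac{b}{b+g})^2+4\tfrac{b}{b+g}\tfrac{1}{R_0^2}}\big)$ stated in the theorem; one checks the quadratic opens upward so the condition is $I_M^*>$ (that root).

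The main obstacle I anticipate is the careful verification of the hypotheses of Brauer's stability theorem — in particular confirming that the kernel $K$ and the linearised equation fit the class of equations for which the characteristic-equation criterion is valid (integrability of $K$, which follows from finiteness of $\int_0^\infty z(\tau)\,d\tau$, itself a consequence of $z(\tau)=O(e^{-\gamma\tau})$ as $\tau\to\infty$ together with $p_A(\tau)\to0$), and correctly tracking the sign and magnitude of the memory coefficient so that the reduction to the $\lambda=0$ test is legitimate rather than merely a necessary condition. For the disease-free equilibrium one repeats the computation with $I_M^*=0$: there $\psi(0)=0$, $\psi'(0)=(g/b)R_0^2$, and the condition $g - b\cdot a\int z \cdot \,\cdot = g(1-R_0^2)>0$ is precisely $R_0<1$, recovering part (1). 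The remaining steps — the convexity/concavity bookkeeping, the algebraic manipulation turning the stability inequality into the displayed quadratic threshold, and the Laplace-transform bound $|\widehat K(\lambda)|\le\widehat K(0)$ for $\Re\lambda\ge0$ — are routine and I would not belabour them.
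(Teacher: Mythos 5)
Your proposal follows essentially the same route as the paper: the existence dichotomy is the same convex-versus-concave intersection argument (your ``strictly convex difference has at most two zeros'' replaces the paper's monotonicity analysis of $f_1'-f_2'$, a cosmetic difference), and the stability analysis is the same linearisation of (\ref{ide_im}) followed by Brauer's criterion, which for a nonnegative integrable kernel reduces exactly to your $\lambda=0$ test and then to the quadratic threshold via the equilibrium relation.

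One bookkeeping slip to fix: your linearised memory coefficient double-counts the chain-rule factor. Writing $\psi(I)=1-e^{-ac_1I}$ with $c_1=\int_0^\infty z(\sigma)\,d\sigma$, you have $\psi'(I_M^*)=ac_1e^{-ac_1I_M^*}$, which already absorbs the factor $ac_1$; the correct memory term is $ab(1-I_M^*)e^{-ac_1I_M^*}\int_0^t z(t-\tau)\epsilon(\tau)\,d\tau = b(1-I_M^*)\psi'(I_M^*)\int_0^t K(t-\tau)\epsilon(\tau)\,d\tau$, i.e.\ without the extra factor of $a$ you place in front of $K$. Consequently the $\lambda=0$ condition is $g+b\psi(I_M^*)>b(1-I_M^*)\psi'(I_M^*)$, not $g+b\psi(I_M^*)>b(1-I_M^*)\psi'(I_M^*)(g/b)R_0^2$ as written; the version you state is inconsistent with your own disease-free computation (where you correctly land on $g(1-R_0^2)>0$) and would not reduce to the quadratic displayed in the theorem. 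With the coefficient corrected, the rest of your argument matches the paper's.
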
 \label{theorem::ide_ss}

\section{A general strategy for constructing tractable models of malarial superinfection} \label{sec::discussion}
Our conceptual approach for modelling superinfection and hypnozoite dynamics for vivax malaria has been three-fold.\\

To establish a functional relationship between the intensity of mosquito-to-human transmission, and the dynamics of superinfection and hypnozoite accrual on the within-host scale (Section \ref{sec::vivax_queue_network}), we construct an open network of infinite server queues that accounts for stochasticity in: 
\begin{itemize}
    \item the temporal sequence of mosquito-to-human transmission events, modelled with a non-homogeneous Poisson process;
    \item parasite inocula and consequently hypnozoite batch sizes, assumed to be geometrically-distributed for each bite \parencite{white2014modelling}; and
    \item the time course of each hypnozoite and primary infection, governed by independent stochastic processes. 
\end{itemize}
The independence structure of the queueing network enables us to characterise the occupancy distribution through relatively straightforward physical arguments \parencite{mehra2022hypnozoite}.\\

We model population level dynamics by constructing a Markov population process to addressing the dependence between the burden of blood-stage infection in the human population, and the intensity of mosquito-to-human transmission (Section \ref{sec::vivax_markov}). To characterise the steady state behaviour, we couple the Markov population process to an independent ensemble of queueing networks, with the same structure as the within host model and a known stability criterion (Theorem \ref{proof::vivax_markov_extinction}). Re-formulation of the state space to count the number of humans and mosquitoes in each permissible infection state, followed by a rescaling argument, yields a density-dependent Markov population process for which a FLLN limit, taking the form of a countably-infinite system of ODEs, is recovered using the work of \textcite{barbour2012law} (Theorem \ref{theorem::flln}).\\

We pay particular attention to the FLLN limit, recognising that it takes a form identical to what could be derived using a deterministic compartmental model of the type that are ubiquitous in mathematical epidemiology (Section \ref{sec::vivax_hybrid}). By recognising that those parts of the compartmental model that relate to the infection status of the human population are identical to the Kolmogorov forward differential equations for a network of infinite-server queues, we propose, to the best of our knowledge, a novel reduction to collapse the countable system of ODEs into a reduced IDE that is more amenable to analysis, and for which a threshold phenomenon can be characterised (Theorem \ref{theorem::ide_ss}).\\

A common vein in the analysis of population-level models through the course of this manuscript is the utility of a physical understanding of the underlying within-host model. We now seek to illustrate how the ideas presented in this manuscript can be adopted more generally to construct tractable models of malarial superinfection, with appropriate constraints on the model structure. 

\subsection{From a countable system of ODEs to a reduced integrodifferential equation}

We arrived at the countable system of ODEs given by Equations (\ref{vivax_hybrid_dhk}) and (\ref{vivax_hybrid_dim}) by recovering a FLLN for an appropriately-scaled density dependent Markov population process. Following a standard compartment modelling approach, however, we can view this system as a natural extension of the Ross-Macdonald framework to allow for superinfection \parencite{bailey1957} and hypnozoite accrual \parencite{white2014modelling}.\\

We can also derive Equations (\ref{vivax_hybrid_dhk}) and (\ref{vivax_hybrid_dim}) under the ``hybrid approximation'', as delineated by \textcite{nasell2013hybrid, henry2020hybrid}. Equation (\ref{vivax_hybrid_dhk}), which captures the time evolution of the within-host PMF, can also be interpreted as the governing equation for the \textit{expected} frequency distribution in the human population, as a function of the FORI. Similarly, Equation (\ref{vivax_hybrid_dim}) is precisely the Kolmogorov forward differential equation equation governing the time evolution of the probability that each mosquito in the population is infected --- which can likewise be interpreted as the expected proportion of infected mosquitoes --- as a function of the prevalence of blood-stage infection in the human population. The premise of the ``hybrid approximation'' is to allow for stochasticity within the human and mosquito populations respectively, but to couple host and vector dynamics through expected values \parencite{nasell2013hybrid, henry2020hybrid}; the hybrid approximation to the present model of hypnozoite accrual and superinfection yields precisely Equations (\ref{vivax_hybrid_dhk}) and (\ref{vivax_hybrid_dim}). Parallels between the FLLN and hybrid approximation have been noted previously in the literature: according to \textcite{hadeler1983nonlinear}, \textcite{rost1979method} proves that hybrid and mean-field approximations are equivalent\footnote{We have only been able to find the citation for \textcite{rost1979method}, but not the actual conference paper}, while \textcite{lewis1975model} establishes similar results through simulation.\\

Irrespective of the derivation of the infinite-compartment model it is the observation that Equation (\ref{vivax_hybrid_dhk}) constitutes the Kolmogorov forward differential equations for an open network of infinite server queues (Section \ref{sec::lln_ode}) that constitutes the crux of our analysis. Rather than solving the system of ODEs (\ref{vivax_hybrid_dhk}) directly, we use a physical understanding of the queueing structure to derive the PGF (\ref{vivax_H_queue}). From the PGF, we extract the prevalence of blood-stage infection as a function of the FORI --- which underpins the coupling between host and vector dynamics, and enables the derivation of a single IDE governing the FORI; in addition to quantities of biological interest, the time evolution of which can be recovered as a function of the FORI solving the aforementioned IDE.\\

Challenges posed by infinite-dimensional systems of ODEs for simulation and analysis have prompted the development of methods like the pseudoequilibrium approximation to yield finite-dimensional systems of ODEs in the presence of superinfection \parencite{dietz1974malaria, henry2020hybrid}. Our IDE reduction offers an alternative construction that yields an identical FORI to the original system of ODEs, under the assumption that the human population is initially uninfected.\\

Relative to the countable system of ODEs which served as our starting point, the reduced IDE is more amenable to steady state analysis. For compartment models with finitely many types (comprising finite-dimensional systems of ODEs), the next generation matrix method of \textcite{diekmann2010construction} is routinely used to characterise threshold phenomena: the basic reproduction number $R_0$ --- which quantifies the number of additional cases stemming from an index case --- is computed as the spectral radius of a ``next generation operator'', with the key implication that $R_0 < 1$ is a necessary and sufficient condition for the stability of the trivial (disease-free) equilibrium \parencite{heffernan2005perspectives}. While the notion of a reproduction number and spectral bound generalises to epidemic models with countably many types \parencite{thieme2009spectral}, computation is not straightforward. In the present manuscript, we apply the stability criterion of \textcite{brauer1978asymptotic} to perform an asymptotic stability analysis for the reduced IDE, but are unaware of readily-verifiable stability criteria applicable to the countable system of ODEs. 

\subsection{Analyticity at the within-host scale}

The strategy delineated above is underpinned by the fact that the within-host model is analytically-tractable. The critical assumption is that consequences of each mosquito-to-human transmission event are governed by independent stochastic processes, for which a time-dependent PGF can be derived. We can accommodate multiple `types' of incoming parasites/infections (here, we distinguish hypnozoites from immediately developing parasite forms, represented by a primary infection), and occupancy times for each parasite/lifecycle stage do not necessarily have to be Markovian (that is, we can allow for general service time distributions). Tractable extensions to the within-host model that permit the IDE reduction proposed here encompass the class of models analysed in \textcite{mehra2023open}, comprising open networks of infinite server queues with nonhomogeneous multivariate batch Poisson arrivals. They are applicable when the evolution of `individuals' (hypnozoites or blood-stage infections in this case) occurs independently. Importantly, this is not the case when there is competition or another form of interaction between parasite broods. Further, while service time distributions and arrival rates can vary deterministically with time, they should not be dependent on the within-host state, barring the consideration of certain forms of immunity.\\

Nonetheless, our work provides a theoretical basis to explore the epidemiology of malarial superinfection, with natural extensions to allow for greater biological realism. A generalisation of the present framework, allowing for long-latency hypnozoites \parencite{white2014modelling, mehra2020activation}; time-varying transmission parameters, and the acquisition of transmission-blocking and antidisease immunity for \textit{P. vivax}, is presented in \textcite{mehra2022hybrid}. There, we derive an infinite compartment model under a ``hybrid approximation'' \parencite{nasell2013hybrid, henry2020hybrid}, comprising an infinite-dimensional system of ODEs, from which we derive a reduced system of IDEs using the approach detailed in Section \ref{sec::vivax_hybrid}. Analyses of both steady state solutions and transient dynamics under this generalised deterministic model are tractable using the within-host distributions derived in \textcite{mehra2022hypnozoite}. 

\section*{Acknowledgements}
The authors thank Jennifer Flegg for valuable discussions around the incorporation of superinfection in population-level models. JMM acknowledges support from the Australian Research Council through ARC Discovery Project DP210101920. PGT acknowledges support from the Australian Research Council through Laureate Fellowship FL130100039.



\newpage

\appendix

\section{A functional law of large numbers}\label{sec::vivax_lln}

We now verify the conditions for the FLLN \textcite{barbour2012law} of the Markov population process governing hypnozoite and superinfection dynamics, as described in Section \ref{sec::vivax_markov}. The scaled transition rates $\omega_J$ are given by Equations (\ref{vivax_alpha_human_recov}) to (\ref{vivax_alpha_human_mos}). We define
\begin{align*}
    \mathcal{R} & :=  (\mathbbm{R}_{+})^{\mathbbm{
    N}} \times (\mathbbm{R}_{+})^2
\end{align*}
with each $(\mathbf{H}, \mathbf{M}) \in \mathcal{R}$ indexed in the form
\begin{align*}
    (\mathbf{H}, \mathbf{M}) = \Big( (H_{0,0}, H_{0,1}, H_{1,0}, H_{0,2}, H_{1,1}, H_{2,0}, \dots ), \, (M_0, M_1) \Big),
\end{align*}
in addition to the subset
\begin{align*}
    \mathcal{R}_0 &:= \{ (\mathbf{H}, \mathbf{M}) \in \mathcal{R}: H_{i,j} = 0 \text{ for all but finitely many } i, j \}.
\end{align*}

Let $\mathcal{J}$ be the set of jumps with non-zero transition rates. Jumps are of ``bounded influence'' in the sense of \parencite{barbour2012law}, in that at most two compartments are affected. Further,
\begin{align*}
    \sum_{J \in \mathcal{J}} \omega_J (\mathbf{H}, \mathbf{M}) < \infty \text{ for all } (\mathbf{H}, \mathbf{M}) \in \mathcal{R}_0.
\end{align*}
Therefore, Assumptions 1.2 and 1.3 of \textcite{barbour2012law} are satisfied, with the implication that Markov process with transition rates given by Equations (\ref{vivax_alpha_human_recov}) to (\ref{vivax_alpha_human_mos}) is a ``pure jump process, at least for some non-zero length of time'' \parencite{barbour2012law}.\\

In the notation of \textcite{barbour2012law}, we write the proposed limit as a semilinear equation
\begin{align*}
    \frac{d (\mathbf{H}, \mathbf{M})}{dt} = \sum_{J \in \mathcal{J}} \omega_J(\mathbf{H}, \mathbf{M}) =  A \cdot (\mathbf{H}, \mathbf{M}) + F \big( (\mathbf{H}, \mathbf{M}) \big)
\end{align*}
where $A$ is a constant (that is, not dependent on time) matrix, with non-zero terms
\begin{align*}
    A_{H_{i,j}, H_{i,j}} & = -i (\mu + \alpha) - j \gamma\\
    A_{H_{i-1,j}, H_{i,j}} & = i \mu \\
    A_{H_{i,j-1}, H_{i,j}} & = j \gamma\\
    A_{H_{i-1,j+1}, H_{i,j}} & = i \alpha\\
    A_{M_1, M_1} &= -g\\
    A_{M_0, M_1} &= g,
\end{align*}
while $F$ has components
\begin{align*}
    F^{H_{i,0}} & = -\beta p M_1 H_{i,0}\\
    F^{H_{i,j}} & = -\beta p M_1 H_{i,j} + \beta p M_1 \sum^{i}_{k=0} H_{k, j-1} \frac{1}{\nu + 1} \Big(\frac{\nu}{\nu +1} \Big)^{i-k}\\
    F^{M_0} &= -\beta q v \Big( \sum^\infty_{j=1} \sum^\infty_{i=0} H_{i, j} \Big) M_0\\
    F^{M_1} &= \beta q v \Big( \sum^\infty_{j=1} \sum^\infty_{i=0} H_{i, j} \Big) M_0.
\end{align*}

\subsection{Assumption 2.1 of \textcite{barbour2012law}} \label{appendix::barbour_2.1}

We recapitulate and verify Assumption 2.1 of \textcite{barbour2012law} below, with changes in notation/exposition in line with the current manuscript.\\

We begin by introducing the set
\begin{align*}
    \mathcal{X}_{+} &:= \Big\{ (\mathbf{H}, \mathbf{M}) \in (\mathbbm{Z}_{\geq 0})^{\mathbbm{
    N}} \times (\mathbbm{Z}_{\geq 0})^2: M_0 + M_1 + \sum^\infty_{i=0} \sum^\infty_{j=0} H_{i,j} < \infty \Big\}.
\end{align*}

By assigning a `size' to each human and mosquito infection state through a fixed vector $\eta = (\eta^{(h)}, \eta^{(m)}) \in \mathcal{R}$, such that
\begin{align}
    \eta^{(h)}_{i,j},  \eta^{(m)}_{k} \geq 1 \qquad \lim_{i+j \to \infty} \eta^{(h)}_{i,j} = \infty, \label{eq:eta_size}
\end{align}
we can construct an empirical moment analogue
\begin{align*}
    S_r \big( (\mathbf{H}, \mathbf{M}) \big) &= \big( \eta^{(m)}_0 \big)^r M_0 + \big( \eta^{(m)}_1 \big)^r M_1 + \sum^\infty_{i=0} \sum^\infty_{j=0} \big( \eta^{(h)}_{i,j} \big)^r H_{i,j}, \, (\mathbf{H}, \mathbf{M}) \in \mathcal{R}_0.
\end{align*}

To bound these empirical moments, \textcite{barbour2012law} introduce the quantites
\begin{align*}
    U_r \big( (\mathbf{H}, \mathbf{M}) \big) &= \sum_{J \in \mathcal{J}} \omega_{J}(\mathbf{H}, \mathbf{M}) \cdot \bigg\{ J^{(m)}_0 \big( \eta^{(m)}_0 \big)^r + J^{(m)}_1 \big( \eta^{(m)}_1 \big)^r + \sum^\infty_{i=0} \sum^\infty_{j=0} J^{(h)}_{i,j} \big( \eta^{(h)}_{i,j} \big)^r \bigg\}\\
    V_r \big( (\mathbf{H}, \mathbf{M}) \big) &= \sum_{J \in \mathcal{J}} \omega_{J}(\mathbf{H}, \mathbf{M}) \cdot \bigg\{ J^{(m)}_0 \big( \eta^{(m)}_0 \big)^r + J^{(m)}_1 \big( \eta^{(m)}_1 \big)^r + \sum^\infty_{i=0} \sum^\infty_{j=0} J^{(h)}_{i,j} \big( \eta^{(h)}_{i,j} \big)^r \bigg\}^2.
\end{align*}
In a similar vein, we set
\begin{align*}
    W_r \big( (\mathbf{H}, \mathbf{M}) \big) &= \sum_{J \in \mathcal{J}} \omega_{J}(\mathbf{H}, \mathbf{M}) \cdot \bigg| J^{(m)}_0 \big( \eta^{(m)}_0 \big)^r + J^{(m)}_1 \big( \eta^{(m)}_1 \big)^r + \sum^\infty_{i=0} \sum^\infty_{j=0} J^{(h)}_{i,j} \big( \eta^{(h)}_{i,j} \big)^r \bigg|
\end{align*}
Assumption 2.1 of \textcite{barbour2012law} then reads as follows.\\

\begin{assumption} \label{assumption_2.1}
There exists $\eta \in \mathcal{R}$ satisfying condition (\ref{eq:eta_size}) and $r_\text{max}^{(1)}, r_\text{max}^{(2)} \geq 1$ such that the following conditions hold:
\begin{enumerate}
    \item For all $(\mathbf{H}, \mathbf{M}) \in \mathcal{X}_+$
    \begin{align*}
         W_r \Big( |\mathbf{M}|_1^{-1} (\mathbf{H}, \mathbf{M})  \Big) < \infty, \qquad 0 \leq r \leq r^{(1)}_\text{max}.
    \end{align*}
    \item $\exists$ non-negative constants $k_{rl}$ such that, for all $(\mathbf{H}, \mathbf{M}) \in \mathcal{R}$ with $|\mathbf{H}|_1 = 1/v, \, |\mathbf{M}|_1 = 1$,
    \begin{align*}
        U_0 \big( ( \mathbf{H}, \mathbf{M}) \big) &\leq k_{01} S_0 \big( ( \mathbf{H}, \mathbf{M}) \big) + k_{04}\\
        U_1 \big( ( \mathbf{H}, \mathbf{M}) \big) &\leq k_{11} S_1 \big( ( \mathbf{H}, \mathbf{M}) \big) + k_{14}\\
        U_r \big( ( \mathbf{H}, \mathbf{M}) \big) &\leq \{ k_{r1} + k_{r2} S_0 \big( ( \mathbf{H}, \mathbf{M}) \big) \} \cdot S_r \big( ( \mathbf{H}, \mathbf{M}) \big) + k_{r4}, \, 2 \leq r \leq r^{(1)}_\text{max}.
    \end{align*}
    \item $\exists$ non-negative constants $k_{rl}$ and a function $p(r)$ with $1 \leq p(r) \leq r_\text{max}^{(1)}$ for $1 \leq r \leq r_\text{max}^{(2)}$, such that, for all $(\mathbf{H}, \mathbf{M}) \in \mathcal{R}$ with $|\mathbf{H}|_1 = 1/v, \, |\mathbf{M}|_1 = 1$
    \begin{align*}
        V_0 \big( ( \mathbf{H}, \mathbf{M}) \big) &\leq k_{03} S_0 \big( ( \mathbf{H}, \mathbf{M}) \big) + k_{05}\\
        V_r \big( ( \mathbf{H}, \mathbf{M}) \big) &\leq k_{r3} S_{p(r)} \big( ( \mathbf{H}, \mathbf{M}) \big) + k_{r5}.
    \end{align*}
\end{enumerate}
\end{assumption}

\begin{claim}
Assumption \ref{assumption_2.1} holds for $\eta = (\eta^{(h)}, \eta^{(m)}) \in \mathcal{R}$ with
\begin{align}
    \eta^{(h)}_{i,j} = i + j + 1, \, \eta^{(m)}_k = 1, \label{eq:eta_choice}
\end{align}
and arbitrarily large $r_\text{max}^{(1)}, r_\text{max}^{(2)}$.
\end{claim}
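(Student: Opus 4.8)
The plan is to exploit the special structure of the weight (\ref{eq:eta_choice}): the ``size'' assigned to a human carrying $i$ hypnozoites and $j$ broods is $\eta^{(h)}_{i,j}=i+j+1$, one more than its total load, while every mosquito has unit size. First note that (\ref{eq:eta_choice}) satisfies (\ref{eq:eta_size}) and lies in $\mathcal{R}$. For a jump $J\in\mathcal{J}$, let $\Delta_r(J;\mathbf{H},\mathbf{M}):=J^{(m)}_0+J^{(m)}_1+\sum_{i,j}J^{(h)}_{i,j}(i+j+1)^r$ denote the bracketed expression common to $U_r$, $V_r$ and $W_r$. The key bookkeeping step is to read off $\Delta_r$ for each jump type: every mosquito transition and every hypnozoite activation leaves the size unchanged, so $\Delta_r\equiv0$ there (for activation, $(i-1)+(j+1)+1=i+j+1$); brood clearance and hypnozoite death give $\Delta_r=(i+j)^r-(i+j+1)^r\le0$; and the only jumps that raise the size are the batch-arrival (mosquito-bite) jumps, for which $\Delta_r=(i+j+k+2)^r-(i+j+1)^r>0$ occurs at rate $\frac{\beta p\,\nu^k}{(\nu+1)^{k+1}}M_1H_{i,j}$, the batch size being a geometric random variable $N$ with PMF (\ref{eq:bite_pmf}) and hence having all moments $\EX[(N+1)^l]<\infty$. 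Since $i+j+1\ge1$, the map $r\mapsto S_r(\mathbf{H},\mathbf{M})$ is moreover non-decreasing, a fact I will use repeatedly.

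Part 1 is then immediate: for $(\mathbf{H},\mathbf{M})\in\mathcal{X}_+$ only finitely many $H_{i,j}$ are nonzero, so $W_r$ is a finite sum over those states plus, for each, a series in the batch size whose summand is polynomial in $k$ against a geometrically decaying rate; the series converges, and rescaling by $|\mathbf{M}|_1^{-1}$ preserves finiteness. For part 2 I restrict to $|\mathbf{H}|_1=1/v$, $|\mathbf{M}|_1=1$ (so $S_0=1+1/v$ and $M_1\le1$) and discard the non-positive clearance and death terms of $U_r$. Since $\Delta_0\equiv0$, $U_0=0$; and for $r=1$ the surviving bite contribution is $\beta p M_1(\nu+1)\sum_{i,j}H_{i,j}=\beta p(\nu+1)/v$, a constant, so $U_1\le k_{14}$. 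For $r\ge2$, expanding $(i+j+k+2)^r-(i+j+1)^r=\sum_{l=1}^r\binom{r}{l}(i+j+1)^{r-l}(k+1)^l$ and summing the geometric series termwise turns the bite contribution into $\beta p M_1\sum_{i,j}H_{i,j}\sum_{l=1}^r\binom{r}{l}\EX[(N+1)^l](i+j+1)^{r-l}$, that is, $\sum_{i,j}H_{i,j}$ paired with a degree-$(r-1)$ polynomial in $i+j+1$ with non-negative coefficients; this is at most a constant times $S_{r-1}\le S_r$, giving the required bound with $k_{r2}=k_{r4}=0$.

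For part 3, $V_0=0$ because $\Delta_0\equiv0$, and for $r\ge1$ I bound $[\Delta_r(J)]^2$ jump by jump. For clearance and death, $[(i+j+1)^r-(i+j)^r]^2\le r^2(i+j+1)^{2r-2}$ by the mean value theorem, and since $i,j\le i+j+1$ the pooled contribution is at most $(\gamma+\mu)r^2 S_{2r-1}$. For the bite jumps, squaring the binomial expansion above and summing the geometric series against the finite moments of $N$ yields a polynomial in $i+j+1$ of degree at most $2r-2$, so that contribution is at most $C_r\,S_{2r-2}\le C_r\,S_{2r-1}$. Hence $V_r\le k_{r3}S_{p(r)}+k_{r5}$ with $p(r):=2r-1$ and $k_{r5}=0$; requiring $r_\text{max}^{(1)}\ge 2r_\text{max}^{(2)}-1$ makes $1\le p(r)\le r_\text{max}^{(1)}$ hold for all $1\le r\le r_\text{max}^{(2)}$, and since no estimate caps either index, both may be taken arbitrarily large.

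I expect no genuine obstacle, only bookkeeping: tracking the polynomial degrees emerging from the two binomial expansions and confirming that the geometric batch contributes only the finite constants $\EX[(N+1)^l]$. Everything goes through because of the structural facts above --- the size function decreases under clearance and death, is left fixed by activation and by every vector transition, and is increased only by light-tailed (geometric) batch arrivals --- so each required inequality falls out without any delicate cancellation; the single point requiring care is keeping $r_\text{max}^{(1)}$ large enough relative to $r_\text{max}^{(2)}$ to accommodate $p(r)=2r-1$.
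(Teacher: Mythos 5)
Your proof is correct and follows essentially the same route as the paper: identify that mosquito transitions and hypnozoite activations leave the $\eta$-size unchanged, clearance and death decrease it, and only the geometric batch arrivals increase it, then bound $U_r$, $V_r$, $W_r$ accordingly using the finite moments of the geometric batch size. The only differences are cosmetic bookkeeping choices — your binomial expansion yields $U_r\lesssim S_{r-1}$ and $p(r)=2r-1$ where the paper bounds $U_r\lesssim S_r$ and takes $p(r)=2r+1$ — and both satisfy the assumption with $r_{\max}^{(1)},r_{\max}^{(2)}$ arbitrarily large.
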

\begin{proof}
Given our choice of $\eta$, we can write
\begin{align*}
    S_r \big( ( \mathbf{H}, \mathbf{M} ) \big) = & M_0 + M_1 + \sum^\infty_{i=0} \sum^\infty_{j=0} (i+j+1)^r H_{i,j}\\
    U_r \big( (\mathbf{H}, \mathbf{M} ) \big) = & \sum^\infty_{i=0} \sum^\infty_{j=0} ( \gamma j + \mu i \big) H_{i,j} \big[ (i + j)^r - (i+j+1)^r \big] \\
    & + \beta p M_1 \sum^\infty_{i=0} \sum^\infty_{j=0} H_{i,j} \sum^\infty_{k=0} \frac{1}{\nu + 1} \Big( \frac{\nu}{\nu+1} \Big)^k \cdot \big[ (i + j + k +2)^r - (i + j + 1)^r \big] \\
    V_r \big( (\mathbf{H}, \mathbf{M} ) \big) =&  \sum^\infty_{i=0} \sum^\infty_{j=0} ( \gamma j + \mu i \big) H_{i,j} \big[ (i + j)^r - (i+j+1)^r \big]^2 \\
    & + \beta p M_1 \sum^\infty_{i=0} \sum^\infty_{j=0} H_{i,j} \sum^\infty_{k=0} \frac{1}{\nu + 1} \Big( \frac{\nu}{\nu+1} \Big)^k \cdot \big[ (i + j + k +2)^r - (i + j + 1)^r \big]^2\\
    W_r \big( (\mathbf{H}, \mathbf{M} ) \big) =&  \sum^\infty_{i=0} \sum^\infty_{j=0} ( \gamma j + \mu i \big) H_{i,j} \big[ (i + j + 1)^r - (i+j)^r \big] \\
    & + \beta p M_1 \sum^\infty_{i=0} \sum^\infty_{j=0} H_{i,j} \sum^\infty_{k=0} \frac{1}{\nu + 1} \Big( \frac{\nu}{\nu+1} \Big)^k \cdot \big[ (i + j + k +2)^r - (i + j + 1)^r \big]
\end{align*}
We begin by noting that
\begin{align*}
    U_0 \big( (\mathbf{H}, \mathbf{M} ) \big) &=  V_0 \big( (\mathbf{H}, \mathbf{M} ) \big) = 0.
\end{align*}

In the case $r=1$, we can simplify these expressions substantially to yield the bounds
\begin{align*}
    U_1 \big( (\mathbf{H}, \mathbf{M} ) \big) &= \sum^\infty_{i=0} \sum^\infty_{j=0} \Big[ \beta p (\nu + 1) M_1 - \gamma j - \mu i \Big] H_{i,j} \leq \frac{\beta p(\nu + 1)}{v}\\
    V_1 \big( (\mathbf{H}, \mathbf{M} ) \big) &= \sum^\infty_{i=0} \sum^\infty_{j=0} \Big[ \beta p (1 + 3\nu + 2 \nu^2) M_1 + \gamma j + \mu i \Big] H_{i,j}\\
    & \leq \frac{\beta p(1 + 3 \nu + 2 \nu^2)}{v} + (\gamma + \mu) S_0 \big( (\mathbf{H}, \mathbf{M}) \big)
\end{align*}
for all $(\mathbf{H}, \mathbf{M} ) \in \mathcal{R}$ with $|\mathbf{H}|_1 = 1/v, \, |\mathbf{M}|_1 = 1$.\\

For $r \geq 1$ and $(\mathbf{H}, \mathbf{M} ) \in \mathcal{R}$ with $|\mathbf{H}|_1 = 1/v, \, |\mathbf{M}|_1 = 1$, we obtain the bounds
\begin{align*}
    U_r \big( (\mathbf{H}, \mathbf{M}) \big) &\leq \beta p M_1  \cdot \sum^\infty_{i=0} \sum^\infty_{j=0} H_{i,j} (i + j +1)^{r} \cdot \Bigg\{ \sum^\infty_{k=0} \frac{1}{1+\nu} \Big(  \frac{1}{1+\nu} \Big)^k \Big( 1 + \frac{k+1}{i+j+1} \Big)^r \Bigg\} \\
    & \leq \beta p \EX \big[ \big( H_b + 2 \big)^r \big] \cdot S_r \big( (\mathbf{H}, \mathbf{M}) \big)\\
    V_r \big( (\mathbf{H}, \mathbf{M} ) \big) & \leq  \sum^\infty_{i=0} \sum^\infty_{j=0}  H_{i,j} (i + j +1)^{2r} \Bigg[ \gamma j + \mu i + \beta p M_1 \cdot  \Bigg\{ \sum^\infty_{k=0} \frac{1}{1+\nu} \Big(  \frac{1}{1+\nu} \Big)^k \Big( 1 + \frac{k+1}{i+j+1} \Big)^{2r} \Bigg\} \Bigg]\\
    & \leq \Big\{ (\gamma + \mu) + \beta p  \EX \big[ \big( H_b + 2 \big)^{2r} \big] \Big\} \cdot  S_{2 r+1} \big( (\mathbf{H}, \mathbf{M} ) \big)
\end{align*}
where $H_b$ is a geometrically-distributed random variable with mean $\nu$ and state space $\mathbbm{Z}_{\geq 0}$, noting that $\EX[H_b^p] < \infty$ for all $p \geq 0$.\\

Likewise, for a given value $(\mathbf{H}, \mathbf{M}) \in \mathcal{X}_+$, 
\begin{align*}
    W_r \Big( |\mathbf{M}|_1^{-1} (\mathbf{H}, \mathbf{M})  \Big) & \leq \Big\{ (\gamma + \mu) +  \beta p \EX \big[ \big( H_b + 2 \big)^r \big] \Big\} \cdot \sum^\infty_{i=0} \sum^\infty_{j=0} \frac{H_{i,j}}{|\mathbf{M}|_1} (i+j+1)^{r+1} < \infty
\end{align*}
for all $r > 0$ since the RHS constitutes a finite sum. We thus see that Assumption \ref{assumption_2.1} is satisfied for arbitrarily large $r_\text{max}^{(1)}, r_\text{max}^{(2)} \geq 1$.
\end{proof}

\subsection{Existence of a unique weak solution to the semilinear equation (\ref{vivax_semilinear})} \label{sec::weak_sol_exists}

We now establish the existence and uniqueness of weak solutions to the semilinear equation (\ref{vivax_semilinear}) (which can be written explicitly as the system of ODEs given by Equations (\ref{vivax_hybrid_dhk}) and (\ref{vivax_hybrid_dim}) after re-scaling $H_{i,j}$ by a factor of $v$).\\

We first verify Assumption 3.1 of \textcite{barbour2012law}, by noting that:
\begin{align*}
    &A_{H_{i, j}, H_{k, \ell}} \geq 0 \text{ for all } (i, j) \neq (k, \ell); \quad A_{M_{k}, M_{\ell}} \geq 0 \text{ for all } k \neq \ell; \quad A_{H_{i,j}, M_k}=A_{M_k, H_{i,j}} = 0\\
    & \sum_{(k, \ell) \neq (i, j)} A_{H_{i,j}, H_{k, \ell}} = (\alpha + \mu)i + \gamma j < \infty \text{ for all } i,j \geq 0.
\end{align*}

We are then required to fix $\mu \in \mathcal{R}$ with $\mu^{(h)}_{i,j}, \mu^{(m)}_k \geq 1$, with the constraint that:
\begin{enumerate}
    \item there exists $r \leq r_\text{max}^{(2)}$ such that $\sup_{i, j \geq 0} \{ \mu^{(h)}_{i,j}/(\eta^{(h)}_{i,j})^r \} < \infty$ (Assumption 4.2.1 of \parencite{barbour2012law})
    \item $A^T \mu \leq w \mu$ for some $w \geq 0$ (Assumption 3.2 of \parencite{barbour2012law})
\end{enumerate}

Since Assumption \ref{assumption_2.1} is satisfied for arbitrarily large $r^{(2)}_\text{max}$, it suffices to set $\mu=\eta$, in which case we can choose $w = \max \{ \alpha + \mu, \gamma, g\}$. We thus define the $\eta$-norm
\begin{align*}
    || ( \mathbf{H},  \mathbf{M} ) ||_{\eta} := |M_0| + |M_1| + \sum^\infty_{i=0} \sum^\infty_{j=0} (i+j+1) |H_{i,j}|
\end{align*}
on the set
\begin{align*}
    \mathcal{R}_\eta := \{ (\mathbf{H}, \mathbf{M}) \in \mathcal{R}: || ( \mathbf{H},  \mathbf{M} ) ||_{\eta} < \infty \}.
\end{align*}

The conditions of Theorem 3.1 of \textcite{barbour2012law} are thus satisfied. The implications of this theorem are as follows. As in \textcite{barbour2012law}, denote by $P(\cdot)$ ``the semigroup of Markov transition matrices corresponding to the minimal process associated with $Q$'', where
\begin{align*}
    &Q_{H_{i,j}, H_{\ell,k}} = A^T_{H_{i,j}, H_{k, \ell}} \Big( \frac{k + \ell + 1}{i + j + 1} \Big) - w \delta_{\{ (i, j), (k, \ell) \}}\\
    &Q_{H_{i,j}, M_{k}} = Q_{M_k, H_{i,j}} = 0\\
    &Q_{M_{k}, M_{\ell}} = A^T_{M_k, M_\ell} - w \delta_{\{k, \ell\}}
\end{align*}
and set
\begin{align*}
    &R^T_{H_{i,j}, H_{\ell,k}} = e^{wt} P_{H_{i,j}, H_{k, \ell}}(t) \Big( \frac{k + j + 1}{k + \ell + 1} \Big)\\
    &R^T_{H_{i,j}, M_{k}} = e^{wt} P_{H_{i,j}, M_k}(t) \big( i + j + 1 \big)\\
    &R^T_{M_k, H_{i,j}} = e^{wt} P_{M_k, H_{i,j}}(t) \Big( \frac{1}{ i + j + 1} \Big)\\
    &R^T_{M_k, M_\ell} = e^{wt} P_{M_k, M_\ell}(t)
\end{align*}

Then $R$ is a strongly continuous semigroup on $\mathcal{R}_\eta$, with $A = R'(0)$ and $R'(t) = R(t)A$ for all $t \geq 0$.\\

It remains to verify Assumption 4.1 of \textcite{barbour2012law}, which mandates that $F$ is locally-Lipschitz in the $\eta$-norm. For $(\mathbf{H}^{(1)}, \mathbf{M}^{(1)}) \neq (\mathbf{H}^{(2)}, \mathbf{M}^{(2)})$ such that 
\begin{align*}
    \big| \big|  F \big( (\mathbf{H}^{(1)}, \mathbf{M}^{(1)}) \big) \big| \big|_\eta , \big| \big|  & F \big( (\mathbf{H}^{(2)}, \mathbf{M}^{(2)}) \big) \big| \big|_\eta \leq z,
\end{align*}
using the inequality
\begin{align*}
    |ax - by| \leq |a| |x-y| + |b-a| |y|
\end{align*}
we can show that
\begin{align*}
    \big| \big|  & F \big( (\mathbf{H}^{(1)}, \mathbf{M}^{(1)}) \big) - F \big( (\mathbf{H}^{(2)}, \mathbf{M}^{(2)}) \big) \big|\big|_\eta\\
     =& 2 \beta q v \Bigg| M^{(1)}_0 \Bigg( \sum^\infty_{j=1} \sum^\infty_{i=0} H^{(1)}_{i,j} \Bigg) - M^{(2)}_0 \Bigg( \sum^\infty_{j=1} \sum^\infty_{i=0} H^{(1)}_{i,j} \Bigg) \Bigg| + \beta p \sum^\infty_{i=0} (i + 1) \cdot \Big| H^{(1)}_{i,0} M^{(1)}_1 - H^{(2)}_{i,0} M^{(2)}_1 \Big|\\
    & + \beta p \sum^\infty_{j=1} \sum^\infty_{i=0} (i + j + 1) \cdot \Big| H^{(1)}_{i,j} M^{(1)}_1 - H^{(2)}_{i,j} M^{(2)}_1 + \sum^i_{k=0} \frac{1}{\nu + 1} \Big( \frac{\nu}{\nu+1} \Big)^{i-k} \Big\{ H^{(1)}_{k,j-1} M^{(1)}_1 - H^{(2)}_{k,j-1} M^{(2)}_1 \Big\} \Big|\\
    \leq & \big[  2 \beta q v + \beta p ] \cdot z \cdot \big| \big|  (\mathbf{H}^{(1)}, \mathbf{M}^{(1)}) - (\mathbf{H}^{(2)}, \mathbf{M}^{(2)})  \big| \big|_\eta +  \beta p \sum^\infty_{j=1} \sum^\infty_{k=0} \big( \nu + j + k+1 \big) \cdot \Big| H^{(1)}_{k,j-1} M_1^{(1)} - H^{(2)}_{k,j-1} M^{(2)}_1 \Big| \\
    \leq & \big[ \beta p ( \nu + 3) + 2 \beta q v z \big] \cdot z \cdot \big| \big|  (\mathbf{H}^{(1)}, \mathbf{M}^{(1)}) - (\mathbf{H}^{(2)}, \mathbf{M}^{(2)})  \big| \big|_\eta
\end{align*}
where we have interchanged the order of summation and used the geometric summation. Therefore, 
\begin{align*}
    \sup_{\substack{(\mathbf{H}^{(1)}, \mathbf{M}^{(1)}) \neq (\mathbf{H}^{(2)}, \mathbf{M}^{(2)}) \\ F (( (\mathbf{H}^{(1,2)}, \mathbf{M}^{(1,2)}) )) ||_\eta \leq z}} \frac{\big| \big| F \big( (\mathbf{H}^{(1)}, \mathbf{M}^{(1)}) \big) - F \big( (\mathbf{H}^{(2)}, \mathbf{M}^{(2)}) \big) \big|\big|_\eta}{\big| \big|  (\mathbf{H}^{(1)}, \mathbf{M}^{(1)}) - (\mathbf{H}^{(2)}, \mathbf{M}^{(2)})  \big| \big|_\eta} \leq \big[ \beta p (\nu + 3) + 2 \beta q v \big] z
\end{align*}
whereby Assumption 4.1 of \textcite{barbour2012law} is satisfied.\\

Using Theorem 1.4, Chapter 6 of \textcite{pazy2012semigroups}, \textcite{barbour2012law} then conclude that the semilinear problem (\ref{vivax_semilinear}) with initial condition  $(\mathbf{H}(0), \mathbf{M}(0)) \in \mathcal{R}_\eta$ has a unique mild solution
\begin{align}
    (\mathbf{H}(t), \mathbf{M}(t)) = R(t) (\mathbf{H}(0), \mathbf{M}(0)) + \int^t_0 R(t-s) F \big( (\mathbf{H}(s), \mathbf{M}(s)) \big) ds \label{contrans_mild_sol}
\end{align}
on the interval $[0, t_\text{max})$, $t_\text{max} > 0$, with
\begin{align*}
    t_\text{max} < \infty \implies \big| \big|  (\mathbf{H}^{(1)}, \mathbf{M}^{(1)}) - (\mathbf{H}^{(2)}, \mathbf{M}^{(2)})  \big| \big|_\eta \to \infty \text{ as } t \to t_\text{max}.
\end{align*}

To establish a bound on the $\eta$-norm of $(\mathbf{H}(t), \mathbf{M}(t))$, we draw on Section \ref{sec::lln_ode}, recognising Equation (\ref{vivax_hybrid_dhk}) as the Kolmogorov forward differential equation for the network of infinite server queues detailed in Section \ref{sec::vivax_queue_network}. Since the FORI --- or equivalently, the infective bite rate experienced by each human ---  is bounded above by $a := \beta p v$, using the generating function given by Equation (\ref{vivax_H_queue}), we see that for all $t \geq 0$
\begin{align*}
    || (\mathbf{H}(t), \mathbf{M}(t)) ||_{\eta} &= M_0 + M_1 + \sum^\infty_{i=0} \sum^\infty_{j=0} (i + j + 1)  H_{i,j}(t)\\
    &= 2 + \frac{1}{v} \bigg[ \frac{\partial H}{\partial x} (1, 1, t) +  \frac{\partial H}{\partial y} (1, 1, t) \bigg]_{\{ I_M(\tau) = 1 \}} \\
    & = 2 + \beta p \int^{t}_0  \Big( \nu  [ p_H(t - \tau) + p_A(t - \tau) \big] + e^{-\gamma(t - \tau)} \Big) d \tau\\
    & \leq 2 + \beta p \Big[ \frac{\nu (\alpha + \gamma)}{\gamma(\alpha + \mu)} \Big( \frac{\alpha}{\gamma} + \frac{\mu - \gamma}{\alpha + \mu} \Big) + \frac{1}{\gamma} \Big].
\end{align*}
Hence, 
\begin{align*}
    \lim_{t \to \infty} || (\mathbf{H}(t), \mathbf{M}(t)) ||_{\eta} < \infty \implies t_\text{max} = \infty.
\end{align*}

\subsection{Assumption 4.2.2 of \textcite{barbour2012law}} \label{appendix::barbour_4.2}

We now verify Assumption 4.2.2 of \textcite{barbour2012law}, reproduced below with modifications in notation/exposition as appropriate.

\begin{assumption} \label{assumption_4.2}
There exists $\zeta \in \mathcal{R}$ with $\zeta^{(h)}_{i,j}, \zeta^{(m)}_k \geq 1$ such that
    \begin{align*}
        Y := & \sum_{J \in \mathcal{J}} \omega_J \Big( |\mathbf{M}|_1^{-1} (\mathbf{H}, \mathbf{M} ) \Big)  \cdot \Bigg\{ \sum_{k=0,1} \big| J^{(m)}_{k} \big| \zeta^{(m)}_{k}  +  \sum^\infty_{i=0} \sum^\infty_{j=0} \big| J^{(h)}_{i,j} \big| \zeta^{(h)}_{i,j} \Bigg\}\\
         \leq & \big\{ k_1 |\mathbf{M}|_1^{-1} S_r \big( (\mathbf{H}, \mathbf{M} ) \big) + k_2 \big\}^b
    \end{align*}
for some $1 \leq r := r(\zeta) \leq r_\text{max}^{(2)}$ and some $b=b(\zeta) \geq 1$, and that
\begin{align*}
    Z := \sum^\infty_{i=0} \sum^\infty_{j=0} \frac{\eta^{(h)}_{i,j}\big( \big| A_{H_{i,j}, H_{i,j}} \big| + 1 \big) }{\sqrt{\zeta^{(h)}_{i,j}}} + \sum_{k=0,1} \frac{\eta^{(m)}_{k}\big( \big| A_{M_{k}, M_{k}} \big| + 1 \big) }{\sqrt{\zeta^{(m)}_{k}}} < \infty. 
\end{align*}
\end{assumption}

\begin{claim}
Fix $v = |\mathbf{M}|_1/|\mathbf{H}|_1$. Assumption \ref{assumption_4.2} holds for $\zeta = (\zeta^{(h)}, \zeta^{(m)}) \in \mathcal{R}$, such that
\begin{align*}
    \zeta^{(h)}_{i,j} = (i+j+1)^9, \, \zeta_k^{(m)} = 1,
\end{align*}
with $b(\zeta)=1$ and $r(\zeta)=10$.
\end{claim}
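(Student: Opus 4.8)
The plan is to verify the two displayed inequalities of Assumption \ref{assumption_4.2} directly for the stated choice $\zeta^{(h)}_{i,j} = (i+j+1)^9$, $\zeta^{(m)}_k = 1$, with $r(\zeta)=10$ and $b(\zeta)=1$. Throughout I would write $(\tilde{\mathbf H},\tilde{\mathbf M}) := |\mathbf M|_1^{-1}(\mathbf H,\mathbf M)$, so that $\tilde M_0 + \tilde M_1 = 1$ (hence $\tilde M_0,\tilde M_1\le 1$) and $\sum_{i,j}\tilde H_{i,j} = |\mathbf H|_1/|\mathbf M|_1 = 1/v$ under the constraint $v=|\mathbf M|_1/|\mathbf H|_1$; also $|\mathbf M|_1^{-1}S_{10}(\mathbf H,\mathbf M) = \tilde M_0 + \tilde M_1 + \sum_{i,j}(i+j+1)^{10}\tilde H_{i,j}$. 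If $S_{10}(\mathbf H,\mathbf M)=\infty$ the bound on $Y$ is vacuous, so I may assume it is finite, which legitimises interchanging the (absolutely convergent) sums below.

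For $Y$, I would proceed jump type by jump type through the list (\ref{vivax_alpha_human_recov})--(\ref{vivax_alpha_human_mos}). A recovery, death or activation jump out of state $(i,j)$ moves one unit between states whose $(i+j)$-values differ by at most one, so its $\zeta$-weighted $\ell^1$-size is at most $2(i+j+1)^9$; each of the rates $\gamma j,\mu i,\alpha i$ is bounded by $(\gamma+\mu+\alpha)(i+j+1)$, so these jumps together contribute at most a fixed multiple of $(i+j+1)^{10}\tilde H_{i,j}$. For a mosquito-to-human transmission jump with batch size $k$ the $\zeta$-weighted size of $J$ is $(i+j+k+2)^9 + (i+j+1)^9 \le 2(k+2)^9(i+j+1)^9$, using $i+j+k+2 \le (k+2)(i+j+1)$; multiplying by the rate, which is at most $\frac{\beta p\nu^k}{(\nu+1)^{k+1}}\tilde H_{i,j}$ since $\tilde M_1\le 1$, and summing over $k$ yields a finite constant $C_\nu := \frac{2\beta p}{\nu+1}\sum_{k\ge 0}\big(\frac{\nu}{\nu+1}\big)^k(k+2)^9$ times $(i+j+1)^9\tilde H_{i,j}$. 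The two mosquito jumps contribute $2g\tilde M_1\le 2g$ and $2\beta q v\big(\sum_{i,j\ge 1}\tilde H_{i,j}\big)\tilde M_0 \le 2\beta q v\cdot v^{-1} = 2\beta q$. Collecting terms gives $Y \le k_1\sum_{i,j}(i+j+1)^{10}\tilde H_{i,j} + k_2 \le k_1|\mathbf M|_1^{-1}S_{10}(\mathbf H,\mathbf M) + k_2$ for suitable $k_1,k_2\ge 0$, which is the required bound with $b(\zeta)=1$; and $r(\zeta)=10 \le r_\text{max}^{(2)}$ holds because the previous claim is valid for arbitrarily large $r_\text{max}^{(2)}$.

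For $Z$, I would note that $|A_{H_{i,j},H_{i,j}}| = i(\alpha+\mu)+j\gamma \le (\alpha+\mu+\gamma)(i+j)$, $|A_{M_1,M_1}|=g$, $A_{M_0,M_0}=0$, while $\eta^{(h)}_{i,j}/\sqrt{\zeta^{(h)}_{i,j}} = (i+j+1)^{-7/2}$ and $\eta^{(m)}_k/\sqrt{\zeta^{(m)}_k} = 1$. Hence the human part of $Z$ is at most $(\alpha+\mu+\gamma+1)\sum_{i,j\ge 0}(i+j+1)^{-5/2}$; grouping the $n+1$ index pairs with $i+j=n$ turns this into $(\alpha+\mu+\gamma+1)\sum_{n\ge 0}(n+1)^{-3/2}<\infty$, and the mosquito part is the finite sum $(0+1)+(g+1)$. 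This step is exactly what dictates the exponent $9$: any exponent strictly greater than $8$ makes the exponent of the grouped series fall below $-1$.

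I expect the only delicate point to be the mosquito-to-human transmission term, where the batch size $k$ is unbounded: one must check that the geometric weight $(\nu/(\nu+1))^k$ dominates the polynomial growth $(i+j+k+2)^9$ of the $\zeta$-weight (it does, since $\sum_k\rho^k k^9<\infty$ for $\rho<1$), and keep careful track of the $|\mathbf M|_1^{-1}$ scaling so that $\tilde M_1\le 1$ and $\sum_{i,j}\tilde H_{i,j}=1/v$ are invoked in the right places. Everything else reduces to elementary polynomial inequalities together with the convergence of $\sum_n(n+1)^{-3/2}$.
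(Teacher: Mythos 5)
Your proposal is correct and follows essentially the same route as the paper: bound $Y$ jump-type by jump-type, control the unbounded-batch term via the finiteness of $\EX[(H_b+2)^9]$ for the geometric batch size, and reduce the human part of $Z$ to the convergent series $\sum_n (n+1)^{-3/2}$. The only (harmless) difference is bookkeeping: you correctly include the activation jumps in $Y$ (the paper's displayed formula omits the $\alpha i$ term, which changes nothing since $\alpha i(i+j+1)^9\le\alpha(i+j+1)^{10}$), and your constants for the mosquito jumps carry the factor of $2$ from $|J^{(m)}_0|+|J^{(m)}_1|$.
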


\begin{proof}
For our choice of $\zeta \in \mathcal{R}$, we have
\begin{align*}
    Y = & g + \beta q v \sum^\infty_{j=1} \sum^\infty_{i=1} \frac{H_{i,j}}{|\mathbf{M}|_1} + \sum^\infty_{i=0} \sum^\infty_{j=0} ( \gamma j + \mu i \big) \frac{H_{i,j}}{| \mathbf{M}|_1} \big[ (i + j)^9 + (i+j+1)^9 \big] \\
    & + \beta p \frac{M_1}{|\mathbf{M}|_1} \sum^\infty_{i=0} \sum^\infty_{j=0} \frac{H_{i,j}}{|\mathbf{M}|_1} \sum^\infty_{k=0} \frac{1}{\nu + 1} \Big( \frac{\nu}{\nu+1} \Big)^k \cdot \big[ (i + j + k +2)^9 + (i + j + 1)^9 \big]\\
    \leq & g + \beta q +  2 (\gamma + \mu) \sum^\infty_{i=0} \sum^\infty_{j=0} \frac{H_{i,j}}{|\mathbf{M}|_1} (i+j+1)^{10} + 2 \beta p \sum^\infty_{i=0} \sum^\infty_{j=0} \frac{H_{i,j}}{|\mathbf{M}|_1} (i + j + 1)^9 \EX \big[ (H_b + 2)^9 \big]\\
    \leq & g + \beta q + \Big( \gamma + \mu + \beta p \EX \big[ (H_b + 2)^9 \big] \Big) S_{10} \Big( |\mathbf{M}|_1^{-1} (\mathbf{H}, \mathbf{M} ) \Big) 
\end{align*}
where $H_b$ is a geometrically-distributed random variable with mean $\nu$ and state space $\mathbbm{Z}_{\geq 0}$. Further,
\begin{align*}
    Z & :=  1 + g + \sum^\infty_{i=0}  \sum^\infty_{j=0} (i+j+1)^{-7/2} \big( 1 + (\alpha + \mu) i + j \gamma \big) \\
    & \leq 1 + g + (\gamma + \alpha + \mu + 1) \sum^\infty_{i=0} \sum^\infty_{j=0} (i+j+1)^{-5/2} < \infty,
\end{align*}
so Assumption \ref{assumption_4.2} is satisfied.
\end{proof}

\newpage

\section{Proof of Theorem \ref{theorem::ide_ss}} \label{appendix:theorem_ide_ss}
\begin{proof}
Denote by $I_M^*$ the steady state FORI. For notational convenience, we set
\begin{align*}
    y(\tau) = \frac{e^{-\gamma \tau} + \nu p_A(\tau)}{1 + \nu p_A(\tau)}.
\end{align*}
and note that
\begin{align*}
    c_1:= \int^\infty_0 y(\tau) d \tau < \infty \qquad \qquad c_2:= \int^\infty_0 \tau y(\tau) d \tau < \infty.
\end{align*}

From the IDE (\ref{ide_im}), we deduce that each equilibrium solution $I_M^*$ necessarily satisfies
\begin{align}
    \frac{g I_M^*}{b(1-I_M^*)} = 1 - e^{-ac_1 I_M^*}. \label{eq:ss_im}
\end{align}
The disease free equilibrium $I_M^* = 0$ clearly satisfies Equation (\ref{eq:ss_im}). We begin by characterising the existence of endemic equilibria, following analogous reasoning to \textcite{mehra2022hybrid}. On the domain $I_M^* \in [0, 1]$, we observe that:
\begin{itemize}
    \item $f_1(I_M^*) := \frac{g I_M^*}{b(1-I_M^*)}$ is monotonically increasing, convex with $f_1(I_M^*) \to \infty$ as $I_M^* \to 1$
    \item $f_2(I_M^*) := 1 - e^{-ac_1 I_M^*}$ is monotonically increasing, concave.
\end{itemize}
Define $F(I_M^*):= f_1'(I_M^*)- f_2'(I_M^*)$. Then $F'(I_M^*) > 0$ for all $I_M^* \in [0, 1]$, so $F(I_M^*)$ is monotonically increasing on the interval $I_M^* \in [0, 1]$ with $F(I_M^*) \to \infty$ as $I_M^* \to 1$. We thus consider two cases:
\begin{itemize}
    \item \textbf{Case 1}: $F(0) > 0$\\
    Here, it follows that $F(I_M^*) > 0$ for all $I_M^* \in [0, 1] \implies (f_1-f_2)$ is monotonically increasing and thus non-zero for all $I_M^* \in (0, 1]$.
    \item \textbf{Case 2}: $F(0) < 0$\\
    By the intermediate value theorem, there exists $M_0 \in (0, 1)$ such that $F(I_M^*)<0$ for all $I_M^* \in [0, M_0)$ and $F(I_M^*)>0$ for all $I_M^* \in (M_0, 1]$. So $(f_1-f_2)$ is monotonically decreasing on the interval $[0, M_0)$ and monotonically increasing on the interval $(M_0, 1)$. Since $(f_1-f_2)(0)=0$ and $(f_1-f_2)(I_M^*) \to \infty$ as $I_M^* \to 1$, it follows there exists unique $M_1 \in (M_0, 1)$ such that $(f_1-f_2)(M_1)=0$.  
\end{itemize}
Therefore, an endemic equilibrium solution exists iff
\begin{align*}
    F(0) < 0 \iff R_0 := \sqrt{\frac{abc_1}{g}} > 1,
\end{align*}
and given $R_0>1$, this endemic equilibrium is unique.\\

We now characterise the asymptotic stability of the disease-free and endemic equilibria. By Theorem 2 of \textcite{brauer1978asymptotic}, an equilibrium solution $I_M(t) = I_M^*$ is uniformly asymptotically stable if the trivial solution $u(t)=0$ to the first-order linear IDE
\begin{align*}
    \frac{d u}{dt} = - \big[ g + b \big (1-e^{-a c_1 I_M^*} \big) \big] \cdot u(t) + ab (1-I_M^*) e^{-a c_1 I_M^*} \int^t_0 u(t-\tau) P(\tau) d \tau
\end{align*}
is uniformly asymptotically stable. By Theorem 1 of \textcite{brauer1978asymptotic}, since $P(\tau)$ is continuous and non-negative for all $\tau \in [0, \infty)$, it follows that $u(t) = 0$ is uniformly asymptotically stable if and only if
\begin{align}
    - \big[ g + b \big (1-e^{-a c_1 I_M^*} \big) \big] + abc_1 (1-I_M^*) e^{-a c_1 I_M^*} < 0, \label{eq:im_stability}
\end{align}

Equation (\ref{eq:im_stability}) serves as a sufficient condition for a steady state solution $I_M^*$ to be uniformly asymptotically-stable. Noting that any steady state solution $I_M^*$ must solve Equation (\ref{eq:ss_im}), we derive the equivalent condition
\begin{align}
    h(I_M^*) :=  (I_M^*)^2 - \Big( 1 + \frac{b}{b + g} \Big) I_M^* + \frac{b}{b + g} \Big( 1 - \frac{1}{R_0^2} \Big) < 0. \label{eq:im_stability_2}
\end{align}

Using Equation (\ref{eq:im_stability_2}), it immediately follows that the disease-free equilibrium $I_M^* = 0$ is uniformly asymptotically stable in the case where $R_0 < 1$.\\

When $R_0>1$, we have $h(0)>0$ and $h(1)<0$, so there exists precisely one root $h(x) =0$ in the domain $x \in (0, 1)$ and
\begin{align*}
    h(I_M^*) < 0 \iff I_M^* > \frac{1 + \frac{b}{b+g} - \sqrt{\big( 1 - \frac{b}{b+g} \big)^2 + 4 \frac{b}{b+g} \frac{1}{R_0^2} }}{2}
\end{align*}
serves as a \textit{sufficient} condition for the endemic equilibrium to be uniformly asymptotically stable.

\end{proof}

\newpage

\section{The pseudeoquilibrium approximation to the superinfection recovery rate proposed by \textcite{white2014modelling}} \label{appendix::pseudoeq}

\textcite{white2014modelling} construct a deterministic compartmental model, consisting of an infinite-dimensional system of ODEs monitoring the hypnozoite burden and the absence or presence of blood-stage infection. Superinfection is captured through a ``pseudoequilibrium approximation'' for the rate of recovery from blood-stage infection, conditional on the size of the hypnozoite reservoir. Below, we characterise this construction in detail.\\

The pseudoequilibrium approximation to the rate of recovery from (super)infection was initially proposed by \textcite{dietz1974malaria}. For an $M/M/\infty$ queue with homogeneous arrival rate $\lambda$ and service rate $\gamma$ --- which describes within-host superinfection dynamics for \textit{P. falciparum} in a constant transmission setting --- \textcite{dietz1974malaria} note that the expected duration of each busy period, or `superinfection' at steady state is
\begin{align*}
    \EX[ \text{superinfection} ] = \frac{e^{\frac{\lambda}{\gamma}}-1}{\lambda}.
\end{align*}

Therefore, as a function of the FORI $\lambda(t) = \lambda$ at time $t$, \textcite{dietz1974malaria} argue that the rate of recovery from superinfection at time $t$ can be approximated to be
\begin{align}
    \rho(\lambda) = \frac{\lambda}{e^{\frac{\lambda}{\gamma}}-1}. \label{dietz_pseudoeq_approx}
\end{align}

This argument essentially replaces a rate of recovery, which necessarily depends on the MOB, with its expectation taken under steady-state conditions with the arrival rate taken to be constant. While this might be a reasonable thing to do if there is no better option, as we show in this paper, the model is still tractable when the MOB is included in the state space. It might be that the pseudoequilibrium approximation yields reasonable results when changes in the FORI occur on a slower time scale than the clearance of blood-stage infection; it has, however, been critiqued on the basis of its inability to capture MOB dynamics following abrupt changes in transmission intensity \parencite{smith2009endemicity}.\\

To account for superinfection, \textcite{white2014modelling} suggest that the recovery rate from blood-stage infection, conditional on the hypnozoite burden $i$ should be
\begin{align}
    \rho_i( \lambda ) = \frac{\lambda + i \alpha}{e^\frac{\lambda + i \alpha}{\gamma} + 1} \label{white_superinf}
\end{align}
where $\lambda$ represents the FORI. For an individual with a hypnozoite reservoir of size $i$, additional blood-stage infections are acquired at rate $\lambda + i \alpha$ --- with primary infections occurring at rate $\lambda$ due to mosquito inoculation, and each hypnozoite in the liver activating independently at rate $\alpha$ to give rise to a relapse. Equation (\ref{white_superinf}) is a modification of Equation (\ref{dietz_pseudoeq_approx}) accounting for the additional contribution of hypnozoite activation to the blood-stage infection burden. However, like the approximation (\ref{dietz_pseudoeq_approx}) it still arises from replacing a recovery rate that depends on the current value of the MOB with its expectation under steady-state conditions, in which case the MOB would have a Poisson distribution with parameter $(\lambda + i \alpha)/\gamma$.\\

If a single hypnozoite were established through each infective bite, then the distribution of MOB conditional on the size of the hypnozoite reservoir would be Poisson-distributed; for a network of infinite server queues with single arrivals, the numbers of busy servers in each queue follow independent Poisson distributions at each time $t$ \parencite{massey1993networks}. The batch structure in hypnozoites established through each infective bite, however, breaks this construction; indeed, \textcite{boucherie1993transient} show that the only open networks of queues that have transient product form distributions are networks of $M_t/M/\infty$ queues.\\

The distribution of MOB, conditional on the size of the hypnozoite reservoir, can be recovered using the within-host queueing model introduced in \textcite{mehra2022hypnozoite} (see Section \ref{sec::vivax_queue_network} for a summary). Suppose the time evolution of the FORI $\lambda(t)$ is known. Recall that $N_H(t)$ represents the size of the hypnozoite reservoir at time $t$ and let $N_B(t) = N_A(t) + N_P(t)$ denote the MOB at time $t$. By \textcite{xekalaki1987method}, it follows that
\begin{align*}
    \EX \big[ y^{N_B(t)} | N_H(t) = i \big] &= \frac{\partial^n H}{\partial x^n} (1, y, t) \Bigg/ \frac{\partial^n H}{\partial x^n} (0, y, t)
\end{align*}
where $H$ denotes the generating function given by Equation (\ref{vivax_H_queue}). Following similar reasoning to \textcite{nasell2013hybrid}, given a hypnozoite reservoir of size $i$, the rate of recovery from (super)infection can be written
\begin{align*}
    \rho_i(t) = \gamma \frac{P(N_B(t) = 1 | N_H(t) = i)}{P(N_B(t) > 0 | N_H(t) = i)}.
\end{align*}

Explicit formulae for the rate of recovery from superinfection, in terms of partial Bell polynomials, can be recovered using Leibniz's integral rule and Faa di Bruno's formula \parencite{xu2011unified}; the resultant expressions, however, are unwieldy and are therefore omitted here. A simpler framework -- formulated in terms of a recovery rate corrected for superinfection -- is detailed in Appendix \ref{appendix::superinf_ide}.

\newpage

\section{A correction to the model of \textcite{anwar2021multiscale}} \label{appendix::superinf_ide}

\textcite{anwar2021multiscale} construct a multiscale transmission model for \textit{P. vivax}, allowing for explicit variation in the hypnozoite burden. The objective of \parencite{anwar2021multiscale} is to capture the model structure proposed by \textcite{white2014modelling}, whilst alleviating the computational difficulties associated with infinite-dimensional compartmental models. The premise of \parencite{anwar2021multiscale} is to embed the probabilistic distributions derived on the within-host scale in \textcite{mehra2022hypnozoite} in a transmission model with finitely-many compartments, with the human population stratified by the presence or absence of liver- and/or blood-stage infection.\\

In particular, as a function of the FORI $\lambda(t)$, \parencite{anwar2021multiscale} derives a system of IDEs governing the time evolution of the prevalence of blood-stage infection $I_H(t)$ in a homogeneous population. To do so, \parencite{anwar2021multiscale} draws on the relapse rate, conditional on the absence of blood-stage infection, that we derive at the within-host level in \parencite{mehra2022hypnozoite}.\\

This within-host framework is designed to capture the joint dynamics of superinfection and the hypnozoite reservoir. As a function of the FORI $\lambda(\tau)$ in the interval $\tau \in [0, t)$, using Equation (77) of \parencite{mehra2022hypnozoite}, the relapse rate $r(t)$ conditional on the absence of blood-stage infection at time $t$ can be written 
\begin{align*}
    r(t) := \alpha \EX[ N_H(t) | N_A(t) + N_P(t) = 0] = \nu \alpha \int^t_0 \frac{\lambda(\tau) p_H(t-\tau) \big( 1 - e^{-\gamma (t-\tau)} \big)}{[1 + \nu p_A(t-\tau)]^2} d \tau.
\end{align*}

While this relapse rate is derived under a framework that takes superinfection into account, the population-level framework proposed in \parencite{anwar2021multiscale} does not track the proportion of individuals in the population that have specific values of the MOB, that is it does not account for superinfection in modelling the recovery rate of individuals. In fact, all blood-stage infected individuals are assumed to recover from infection at a constant rate $\gamma$. The inclusion of superinfection in the work of \parencite{anwar2021multiscale} would necessitate the derivation of the recovery rate from (super)infection under the within-host framework of \parencite{mehra2022hypnozoite}. As in \textcite{nasell2013hybrid}, the rate of recovery from (super)infection could be written in terms of the within-host PMF for the MOB $N_A(t) + N_P(t)$ at time $t$ (given by Equations (81) and (82) of \parencite{mehra2022hypnozoite}):
\begin{align}
    \rho(t) = \frac{\gamma P(N_A(t) + N_P(t) = 1)}{P(N_A(t) + N_P(t) >0 )} = \frac{\gamma \int^t_0 \lambda(\tau) \frac{e^{-\gamma(t-\tau)} + \nu p_A(t-\tau)}{[1 + \nu p_A(t-\tau) ]^2} d \tau}{\exp \Big\{ \int^t_0 \lambda(\tau) \frac{e^{-\gamma(t-\tau)} + \nu p_A(t-\tau)}{1 + \nu p_A(t-\tau) } d \tau \Big\} - 1},
\end{align}
which, in turn, is a function of the FORI $\lambda(\tau)$ in the interval $\tau \in [0, t)$. A rigorous characterisation of superinfection would yield the ODE
\begin{align*}
    \frac{d I_H}{dt} = \big( r(t) + \lambda(t) \big) (1 - I_H) - \rho(t) I_H.
\end{align*}
rather than
\begin{align*}
    \frac{d I_H}{dt} = \big( r(t) + \lambda(t) \big)  (1 - I_H) - \gamma I_H,
\end{align*}
which appeared in \parencite{anwar2021multiscale}.\\

However, rather than formulating a model in terms of the relapse rate and the rate of recovery from (super)infection, a simpler but equivalent approach is to directly consider the (integral) expression for the expected prevalence of blood-stage infection in the human population over time under the within-host framework of \parencite{mehra2022hypnozoite}, as we have done in this paper.\\

While the proposed correction to the framework of \parencite{anwar2021multiscale} is somewhat subtle, it has important implications. As discussed in \parencite{anwar2021multiscale}, the omission of superinfection at the population-level leads to discrepancies relative to the framework of \parencite{white2014modelling}, the re-formulation of which was the primary aim; these discrepancies are evident at high transmission intensities, where superinfections would be expected. The deterministic model introduced in Section \ref{sec::vivax_hybrid}, in contrast, serves as an extension of the framework of \parencite{white2014modelling} to allow for superinfection. The omission of superinfection at the population-level in \parencite{anwar2021multiscale} means that a number of key epidemiological parameters that are analytically tractable at the within-host level do not make sense at the population-level. A key example is the expected contribution of relapse vs primary infection to the overall burden of recurrence (which includes the case of overlapping relapses and primary infections, as in \parencite{mehra2022hypnozoite}), 

\newpage

\section{The model of \textcite{bailey1957}} \label{sec::bailey}
In the case $\nu=0$ (that is, the absence of hypnozoite accrual), Equations (\ref{vivax_hybrid_dhk}) and (\ref{vivax_hybrid_dim}) reduce to the classical (deterministic) model of superinfection for \textit{P. falciparum} formulated by \textcite{bailey1957}.\\ 

In the case $\nu=0$, Equation (\ref{vivax_hybrid_dhk}) yields the Kolmogorov forward differential equations for the $M_t/M/\infty$ queue. By \textcite{eick1993physics}, it follows that the MOB in the human population at time $t$ necessarily has a Poisson distribution. The mean MOB
\begin{align*}
    m_b(t) := a \int^t_0 I_M(\tau) e^{-\gamma(t-\tau)} d \tau = -\log(1 - I_H(t))
\end{align*}
can be written a function of the FORI $aI_M(\tau)$ in the interval $[0, t)$, with the final equality following from Equation (\ref{eq:prev}).\\

In this case, we can collapse the infinite-dimensional system of ODEs given by Equations (\ref{vivax_hybrid_dhk}) and (\ref{vivax_hybrid_dim}) into the two-dimensional system
\begin{align}
    \frac{dI_H}{dt} &= -\big[ a I_M - \gamma \log(1-I_H) \big] (1 - I_H) \label{eq:bailey_ih}\\
    \frac{dI_M}{dt} &= -gI_M + b I_H(1-I_M) \label{eq:bailey_im}
\end{align}
by differentiating Equation (\ref{eq:prev}). When the MOB in the human population is initially Poisson-distributed, the time evolution of FORI is equivalent under the two-dimensional system of ODEs given by Equations (\ref{eq:bailey_ih}) and (\ref{eq:bailey_im}), and the countable system of ODEs proposed by \textcite{bailey1957} (Equations (\ref{vivax_hybrid_dhk}) and (\ref{vivax_hybrid_dim}) in the case $\nu=0$).\\

Noting the functional dependence between the mean MOB $m_b(t)$ and the prevalence of blood-stage infection $I_H(t)$ at time $t$, we observe that the system of ODEs given by Equations (\ref{eq:bailey_ih}) and (\ref{eq:bailey_im}) is equivalent to the ``hybrid approximation'' proposed by \textcite{naasell1991quasi}, whereby host and vector dynamics are coupled through expected values.\\

Applying the next generation matrix method \parencite{van2002reproduction} to the system of ODEs given by Equations (\ref{eq:bailey_ih}) and (\ref{eq:bailey_im}) yields the basic reproduction number
\begin{align*}
    R_0 = \sqrt{\frac{ab}{g \gamma}},
\end{align*}
which is the same as Equation (\ref{eq:R0}) in the special case $\nu=0$.

\newpage

\printbibliography

@article{henry2020hybrid,
  title={A hybrid model for the effects of treatment and demography on malaria superinfection},
  author={Henry, John M},
  journal={Journal of Theoretical Biology},
  volume={491},
  pages={110194},
  year={2020},
  publisher={Elsevier}
}

@article{mehra2023open,
  title={Open networks of infinite server queues with non-homogeneous multivariate batch Poisson arrivals},
  author={Mehra, Somya and Taylor, Peter G},
  journal={arXiv preprint arXiv:2303.16120},
  year={2023}
}

@article{heffernan2005perspectives,
  title={Perspectives on the basic reproductive ratio},
  author={Heffernan, Jane M and Smith, Robert J and Wahl, Lindi M},
  journal={Journal of the Royal Society Interface},
  volume={2},
  number={4},
  pages={281--293},
  year={2005},
  publisher={The Royal Society London}
}

@article{thieme2009spectral,
  title={Spectral bound and reproduction number for infinite-dimensional population structure and time heterogeneity},
  author={Thieme, Horst R},
  journal={SIAM Journal on Applied Mathematics},
  volume={70},
  number={1},
  pages={188--211},
  year={2009},
  publisher={SIAM}
}

@article{diekmann1990definition,
  title={On the definition and the computation of the basic reproduction ratio R0 in models for infectious diseases in heterogeneous populations},
  author={Diekmann, Odo and Heesterbeek, Johan Andre Peter and Metz, Johan AJ},
  journal={Journal of Mathematical Biology},
  volume={28},
  pages={365--382},
  year={1990},
  publisher={Springer}
}

@article{eick1993physics,
  title={The physics of the Mt/G/$\infty$ queue},
  author={Eick, Stephen G and Massey, William A and Whitt, Ward},
  journal={Operations Research},
  volume={41},
  number={4},
  pages={731--742},
  year={1993},
  publisher={INFORMS}
}

@article{anwar2023optimal,
  title={Optimal interruption of P. vivax malaria transmission using mass drug administration},
  author={Anwar, Md Nurul and Hickson, Roslyn I and Mehra, Somya and Price, David J and McCaw, James M and Flegg, Mark B and Flegg, Jennifer A},
  journal={Bulletin of Mathematical Biology},
  volume={85},
  number={6},
  pages={43},
  year={2023},
  publisher={Springer}
}

@article{mehra2022hybrid,
  title={A hybrid transmission model for Plasmodium vivax accounting for superinfection, immunity and the hypnozoite reservoir},
  author={Mehra, Somya and Taylor, Peter G and McCaw, James M and Flegg, Jennifer A},
  journal={arXiv preprint arXiv:2208.10403},
  year={2022}
}

@mastersthesis{mehra2022superinf,
    title={Epidemic models for malaria:
superinfection},
  author  = {Mehra, Somya},
  school  = {School of Mathematics and Statistics, University of Melbourne},
  year    = {2022}
}

@article{dietz1974malaria,
  title={A malaria model tested in the African savannah},
  author={Dietz, K and Molineaux, L and Thomas, A},
  journal={Bulletin of the World Health Organization},
  volume={50},
  number={3-4},
  pages={347},
  year={1974},
  publisher={World Health Organization}
}

@article{smith2009endemicity,
  title={Endemicity response timelines for Plasmodium falciparum elimination},
  author={Smith, David L and Hay, Simon I},
  journal={Malaria Journal},
  volume={8},
  number={1},
  pages={1--13},
  year={2009},
  publisher={BioMed Central}
}

@article{boucherie1993transient,
  title={Transient product from distributions in queueing networks},
  author={Boucherie, Richard J and Taylor, Peter G},
  journal={Discrete Event Dynamic Systems},
  volume={3},
  number={4},
  pages={375--396},
  year={1993},
  publisher={Springer}
}

@article{luchsinger2001stochastic,
  title={Stochastic models of a parasitic infection, exhibiting three basic reproduction ratios},
  author={Luchsinger, CJ},
  journal={Journal of Mathematical Biology},
  volume={42},
  number={6},
  pages={532--554},
  year={2001},
  publisher={Springer}
}

@book{bailey1957,
  title={The Mathematical Theory of Epidemics},
  author={Bailey, Norman T. J.},
  year={1957},
  publisher={Charles Griffin \& Company Limited}
}

@article{barbour2012law,
  title={A law of large numbers approximation for Markov population processes with countably many types},
  author={Barbour, AD and Luczak, MJ},
  journal={Probability Theory and Related Fields},
  volume={153},
  number={3},
  pages={727--757},
  year={2012},
  publisher={Springer}
}

@article{white2014modelling,
  title={Modelling the contribution of the hypnozoite reservoir to \textit{Plasmodium vivax} transmission},
  author={White, Michael T and Karl, Stephan and Battle, Katherine E and Hay, Simon I and Mueller, Ivo and Ghani, Azra C},
  journal={eLife},
  volume={3},
  pages={e04692},
  year={2014},
  publisher={eLife Sciences Publications Limited}
}

@article{diekmann2010construction,
  title={The construction of next-generation matrices for compartmental epidemic models},
  author={Diekmann, Odo and Heesterbeek, JAP and Roberts, Michael G},
  journal={Journal of the Royal Society Interface},
  volume={7},
  number={47},
  pages={873--885},
  year={2010},
  publisher={The Royal Society}
}

@article{massey1993networks,
  title={Networks of infinite-server queues with nonstationary Poisson input},
  author={Massey, William A and Whitt, Ward},
  journal={Queueing Systems},
  volume={13},
  number={1},
  pages={183--250},
  year={1993},
  publisher={Springer}
}

@article{xekalaki1987method,
  title={A method for obtaining the probability distribution of m components conditional on $\ell$ components of a random sample},
  author={Xekalaki, Eydokia},
  journal={Rev. Roumaine Math. Pure Appl},
  volume={32},
  pages={581--583},
  year={1987}
}

@book{pazy2012semigroups,
  title={Semigroups of linear operators and applications to partial differential equations},
  author={Pazy, Amnon},
  volume={44},
  year={2012},
  publisher={Springer Science \& Business Media}
}

@article{van2002reproduction,
  title={Reproduction numbers and sub-threshold endemic equilibria for compartmental models of disease transmission},
  author={Van den Driessche, Pauline and Watmough, James},
  journal={Mathematical Biosciences},
  volume={180},
  number={1-2},
  pages={29--48},
  year={2002},
  publisher={Elsevier}
}

@article{naasell1991quasi,
  title={On the quasi-stationary distribution of the Ross malaria model},
  author={N{\aa}sell, Ingemar},
  journal={Mathematical Biosciences},
  volume={107},
  number={2},
  pages={187--207},
  year={1991},
  publisher={Elsevier}
}

@article{mehra2021antibody,
  title={Antibody Dynamics for Plasmodium vivax Malaria: A Mathematical Model},
  author={Mehra, Somya and McCaw, James M and Flegg, Mark B and Taylor, Peter G and Flegg, Jennifer A},
  journal={Bulletin of Mathematical Biology},
  volume={83},
  number={1},
  pages={1--27},
  year={2021},
  publisher={Springer}
}

@article{nedelman1984inoculation,
  title={Inoculation and recovery rates in the malaria model of Dietz, Molineaux, and Thomas},
  author={Nedelman, Jerry},
  journal={Mathematical Biosciences},
  volume={69},
  number={2},
  pages={209--233},
  year={1984},
  publisher={Elsevier}
}

@book{nasell2013hybrid,
  title={Hybrid models of tropical infections},
  author={N{\aa}sell, Ingemar},
  volume={59},
  year={2013},
  publisher={Springer Science \& Business Media}
}

@book{who2021,
  title={World Malaria Report 2021},
  author={WHO},
  year={2021},
  publisher={World Health Organization, Geneva, Switzerland}
}

@article{battle2021global,
  title={The global burden of Plasmodium vivax malaria is obscure and insidious},
  author={Battle, Katherine E and Baird, J Kevin},
  journal={PLoS medicine},
  volume={18},
  number={10},
  pages={e1003799},
  year={2021},
  publisher={Public Library of Science San Francisco, CA USA}
}

@article{commons2020estimating,
  title={Estimating the proportion of Plasmodium vivax recurrences caused by relapse: a systematic review and meta-analysis},
  author={Commons, Robert J and Simpson, Julie A and Watson, James and White, Nicholas J and Price, Ric N},
  journal={The American Journal of Tropical Medicine and Hygiene},
  volume={103},
  number={3},
  pages={1094},
  year={2020},
  publisher={The American Society of Tropical Medicine and Hygiene}
}

@article{macdonald1950analysis,
  title={The analysis of infection rates in diseases in which superinfection occurs},
  author={Macdonald, George},
  journal={Tropical Diseases Bulletin},
  volume={47},
  pages={907--915},
  year={1950}
}

@book{rost1979method,
  title={On the method of hybrid model approximation},
  author={Rost, H},
  year={1979},
  publisher={Univ., Sonderforschungsbereich 123}
}

@incollection{hadeler1983nonlinear,
  title={Nonlinear hyperbolic partial differential equations for the dynamics of parasite populations},
  author={Hadeler, KP and Dietz, K},
  booktitle={Hyperbolic Partial Differential Equations},
  pages={415--430},
  year={1983},
  publisher={Elsevier}
}

@article{lewis1975model,
  title={A model for the parasitic disease bilharziasis},
  author={Lewis, Trevor},
  journal={Advances in Applied Probability},
  volume={7},
  number={4},
  pages={673--704},
  year={1975},
  publisher={Cambridge University Press}
}

@article{hagmann2003malaria,
  title={Malaria and its possible control on the island of Pr{\'\i}ncipe},
  author={Hagmann, Reto and Charlwood, J Derek and Gil, Vilfrido and Ferreira, Concei{\c{c}}{\~a}o and Do Ros{\'a}rio, Virg{\'\i}llo and Smith, Tom A},
  journal={Malaria Journal},
  volume={2},
  number={1},
  pages={1--9},
  year={2003},
  publisher={BioMed Central}
}

@article{smith2007standardizing,
  title={Standardizing estimates of the Plasmodium falciparum parasite rate},
  author={Smith, David L and Guerra, Carlos A and Snow, Robert W and Hay, Simon I},
  journal={Malaria Journal},
  volume={6},
  number={1},
  pages={1--10},
  year={2007},
  publisher={Springer}
}

@article{gemperli2006malaria,
  title={Malaria mapping using transmission models: application to survey data from Mali},
  author={Gemperli, A and Vounatsou, P and Sogoba, N and Smith, T},
  journal={American Journal of Epidemiology},
  volume={163},
  number={3},
  pages={289--297},
  year={2006},
  publisher={Oxford University Press}
}

@article{mehra2020activation,
  title={An Activation-Clearance Model for \textit{Plasmodium vivax} Malaria},
  author={Mehra, Somya and McCaw, James M and Flegg, Mark B and Taylor, Peter G and Flegg, Jennifer A},
  journal={Bulletin of Mathematical Biology},
  volume={82},
  number={2},
  pages={32},
  year={2020},
  publisher={Springer}
}

@article{xu2011unified,
  title={A unified approach to some recurrence sequences via Faa di Bruno’s formula},
  author={Xu, Aimin and Cen, Zhongdi},
  journal={Computers \& Mathematics with Applications},
  volume={62},
  number={1},
  pages={253--260},
  year={2011},
  publisher={Elsevier}
}

@article{white2012relapse,
  title={Relapse},
  author={White, Nicholas J and Imwong, Mallika},
  journal={Advances in Parasitology},
  volume={80},
  pages={113--150},
  year={2012},
  publisher={Elsevier}
}

@article{roy2013potential,
  title={The potential elimination of Plasmodium vivax malaria by relapse treatment: insights from a transmission model and surveillance data from NW India},
  author={Roy, Manojit and Bouma, Menno J and Ionides, Edward L and Dhiman, Ramesh C and Pascual, Mercedes},
  journal={PLoS Neglected Tropical Diseases},
  volume={7},
  number={1},
  pages={e1979},
  year={2013},
  publisher={Public Library of Science}
}

@article{mehra2022hypnozoite,
  title={Hypnozoite dynamics for Plasmodium vivax malaria: the epidemiological effects of radical cure},
  author={Mehra, Somya and Stadler, Eva and Khoury, David and McCaw, James M and Flegg, Jennifer A},
  journal={Journal of Theoretical Biology},
  pages={111014},
  year={2022},
  publisher={Elsevier}
}

@article{mikolajczak2015plasmodium,
  title={Plasmodium vivax liver stage development and hypnozoite persistence in human liver-chimeric mice},
  author={Mikolajczak, Sebastian A and Vaughan, Ashley M and Kangwanrangsan, Niwat and Roobsoong, Wanlapa and Fishbaugher, Matthew and Yimamnuaychok, Narathatai and Rezakhani, Nastaran and Lakshmanan, Viswanathan and Singh, Naresh and Kaushansky, Alexis and others},
  journal={Cell Host \& Microbe},
  volume={17},
  number={4},
  pages={526--535},
  year={2015},
  publisher={Elsevier}
}

@article{aguas2012modeling,
  title={Modeling the effects of relapse in the transmission dynamics of malaria parasites},
  author={Aguas, Ricardo and Ferreira, Marcelo U and Gomes, M Gabriela M},
  journal={Journal of Parasitology Research},
  volume={2012},
  year={2012},
  publisher={Hindawi}
}

@article{chamchod2013modeling,
  title={Modeling Plasmodium vivax: relapses, treatment, seasonality, and G6PD deficiency},
  author={Chamchod, Farida and Beier, John C},
  journal={Journal of Theoretical Biology},
  volume={316},
  pages={25--34},
  year={2013},
  publisher={Elsevier}
}

@article{white2016variation,
  title={Variation in relapse frequency and the transmission potential of Plasmodium vivax malaria},
  author={White, Michael T and Shirreff, George and Karl, Stephan and Ghani, Azra C and Mueller, Ivo},
  journal={Proceedings of the Royal Society B: Biological Sciences},
  volume={283},
  number={1827},
  pages={20160048},
  year={2016},
  publisher={The Royal Society}
}

@article{ishikawa2003mathematical,
  title={A mathematical model for the transmission of Plasmodium vivax malaria},
  author={Ishikawa, Hirofumi and Ishii, Akira and Nagai, Nobuhiko and Ohmae, Hiroshi and Harada, Masakazu and Suguri, Setsuo and Leafasia, Judson},
  journal={Parasitology International},
  volume={52},
  number={1},
  pages={81--93},
  year={2003},
  publisher={Elsevier}
}

@article{mueller2009key,
  title={Key gaps in the knowledge of \textit{Plasmodium vivax}, a neglected human malaria parasite},
  author={Mueller, Ivo and Galinski, Mary R and Baird, J Kevin and Carlton, Jane M and Kochar, Dhanpat K and Alonso, Pedro L and del Portillo, Hernando A},
  journal={The Lancet Infectious Diseases},
  volume={9},
  number={9},
  pages={555--566},
  year={2009},
  publisher={Elsevier}
}

@article{popovici2018genomic,
  title={Genomic analyses reveal the common occurrence and complexity of Plasmodium vivax relapses in Cambodia},
  author={Popovici, Jean and Friedrich, Lindsey R and Kim, Saorin and Bin, Sophalai and Run, Vorleak and Lek, Dysoley and Cannon, Matthew V and Menard, Didier and Serre, David},
  journal={MBio},
  volume={9},
  number={1},
  year={2018},
  publisher={Am Soc Microbiol}
}

@article{robinson2015strategies,
  title={Strategies for understanding and reducing the Plasmodium vivax and Plasmodium ovale hypnozoite reservoir in Papua New Guinean children: a randomised placebo-controlled trial and mathematical model},
  author={Robinson, Leanne J and Wampfler, Rahel and Betuela, Inoni and Karl, Stephan and White, Michael T and Suen, Connie SN Li Wai and Hofmann, Natalie E and Kinboro, Benson and Waltmann, Andreea and Brewster, Jessica and others},
  journal={PLoS Med},
  volume={12},
  number={10},
  pages={e1001891},
  year={2015},
  publisher={Public Library of Science}
}

@article{white2018mathematical,
  title={Mathematical modelling of the impact of expanding levels of malaria control interventions on Plasmodium vivax},
  author={White, Michael T and Walker, Patrick and Karl, Stephan and Hetzel, Manuel W and Freeman, Tim and Waltmann, Andreea and Laman, Moses and Robinson, Leanne J and Ghani, Azra and Mueller, Ivo},
  journal={Nature Communications},
  volume={9},
  number={1},
  pages={1--10},
  year={2018},
  publisher={Nature Publishing Group}
}

@article{alonso2019critical,
  title={Critical transitions in malaria transmission models are consistently generated by superinfection},
  author={Alonso, David and Dobson, Andy and Pascual, Mercedes},
  journal={Philosophical Transactions of the Royal Society B},
  volume={374},
  number={1775},
  pages={20180275},
  year={2019},
  publisher={The Royal Society}
}

@article{schafer2021plasmodium,
  title={Plasmodium vivax latent liver stage infection and relapse: biological insights and new experimental tools},
  author={Sch{\"a}fer, Carola and Zanghi, Gigliola and Vaughan, Ashley M and Kappe, Stefan HI},
  journal={Annual Review of Microbiology},
  volume={75},
  pages={87--106},
  year={2021},
  publisher={Annual Reviews}
}

@article{willmot2001transient,
  title={On the transient analysis of the MX/M/$\infty$ queue},
  author={Willmot, Gordon E and Drekic, Steve},
  journal={Operations Research Letters},
  volume={28},
  number={3},
  pages={137--142},
  year={2001},
  publisher={Elsevier}
}

@article{venugopal2020plasmodium,
  title={Plasmodium asexual growth and sexual development in the haematopoietic niche of the host},
  author={Venugopal, Kannan and Hentzschel, Franziska and Valki{\=u}nas, Gediminas and Marti, Matthias},
  journal={Nature Reviews Microbiology},
  volume={18},
  number={3},
  pages={177--189},
  year={2020},
  publisher={Nature Publishing Group}
}

@article{brauer1978asymptotic,
  title={Asymptotic stability of a class of integro-differential equations},
  author={Brauer, Fred},
  journal={Journal of Differential Equations},
  volume={28},
  number={2},
  pages={180--188},
  year={1978},
  publisher={Elsevier}
}

@article{anwar2021multiscale,
  title={A Multiscale Mathematical Model of Plasmodium Vivax Transmission},
  author={Anwar, Md Nurul and Hickson, Roslyn I and Mehra, Somya and McCaw, James M and Flegg, Jennifer A},
  journal={Bulletin of Mathematical Biology},
  volume={84},
  number={8},
  pages={81},
  year={2022},
  publisher={Springer}
}

\end{document}